\documentclass[letterpaper, 10 pt, conference]{ieeeconf}  

\IEEEoverridecommandlockouts                              
\usepackage{graphicx}

\usepackage{epsfig} 
\usepackage{times} 
\usepackage{amsmath} 
\usepackage{amsfonts}
\usepackage{mathtools}
\usepackage{amssymb}  

\usepackage{xcolor} 

\usepackage[caption=false,font=footnotesize]{subfig}

\usepackage{lipsum}

\usepackage{amsthm}

\newtheoremstyle{mythmstyle}
  {4pt}   
  {4pt}   
  {\normalfont}
  {}
  {\bfseries}
  {.}
  { }
  {}

\theoremstyle{mythmstyle}

\newtheorem{theorem}{Theorem}
\newtheorem{lemma}{Lemma}

\newtheorem{corollary}{Corollary}

\newtheorem{assumption}{Assumption}

\newcommand{\R}{\mathbb{R}}

\newcommand{\Ncal}{\mathcal{N}}

\newcommand{\Jcal}{\mathcal{J}}
\newcommand{\one}{\mathbf{1}}

\newcommand{\qt}{\mathsf{q}}
\newcommand{\pt}{\mathsf{p}}

\newcommand{\yt}{\mathsf{y}}

\newcommand{\xt}{\mathsf{x}}
\newcommand{\st}{\mathsf{s}}
\newcommand{\zt}{\mathsf{z}}
\newcommand{\wt}{\mathsf{w}}

\newcommand{\vt}{\mathsf{v}}

\title{\LARGE \bf
Emergent Behavior Is Robust to Communication Delays\\at the Cost of Slower System Evolution
}

\author{Seokho Jeong, Jin Gyu Lee, and Hyungbo Shim
\thanks{This work was supported by the National Research Foundation of Korea (NRF) grant funded by the Korea government (MSIT) (RS-2026-25504174).}
\thanks{Seokho Jeong, Jin Gyu Lee, and Hyungbo Shim are with ASRI, Department of Electrical and Computer Engineering, Seoul National University, Seoul 08826, South Korea (e-mail: {\tt \small shjeong@cdsl.kr}; {\tt \small jingyu.lee@snu.ac.kr}; {\tt \small  hshim@snu.ac.kr}).}
}
\begin{document}

\maketitle
\thispagestyle{empty}
\pagestyle{empty}

\begin{abstract}

Previous works have shown that strong coupling among heterogeneous agents enforces practical synchronization, leading to collective behavior governed by emergent dynamics.
However, it remains an open question whether emergent behavior persists in the presence of communication delays, as high-gain methods are typically sensitive to delays.
To address this problem, we instead slow down the agent dynamics, reproducing the synchronization mechanism of strong coupling on a slow time scale.
As a result, the emergence of collective behavior is guaranteed despite arbitrary constant communication delays.
This, however, comes at a cost: delays induce a derivative-like term, thereby scaling down the vector field of the emergent dynamics.
The amount of the scaling is explicitly characterized as a function of the coupling weights and the delay lengths.
We further suggest that this delay-induced slowdown, which is of independent interest, may be a general feature of diffusive coupling under suitable assumptions.

\end{abstract}

\section{Introduction}\label{section:introduction}

Motivated by observations in nature and engineering applications, researchers have studied the collective behavior of multi-agent systems for decades through the lenses of consensus and synchronization.
Following the seminal work \cite{olfati-saber2004tac}, the literature has expanded to encompass more general classes of agent dynamics \cite{scardovi2008cdc,yu2011tac,wieland2011aut} and communication constraints \cite{moreau2004cdc,munz2010aut,dimarogonas2012tac}. Within this landscape, heterogeneous networks are of particular interest because, unlike networks of identical agents, they can exhibit rich collective behavior distinct from that of any individual agent in isolation.
Characterizing this collective behavior is important since it often encodes global information that each agent cannot access directly, potentially leading to design principles for distributed algorithms.

For example, consider a network of $N$ agents modeled by
\begin{align}\label{eq:intro_high_gain}
    \dot{x}_i=f_i(t,x_i)+k\sum_{j\in \Ncal_i}a_{ij}(x_j-x_i),\quad i=1,\ldots,N,
\end{align}
where $x_i \in \R^n$ is the state of agent $i$, $k>0$ is the coupling gain, $\Ncal_i$ is the set of neighbors of agent $i$, and $a_{ij}$ is the $(i,j)$-th element of the adjacency matrix associated with the communication graph (see Section~\ref{subsection:problem_setup}). Each $f_i$ represents the node vector field of agent $i$, which can differ across agents.
Due to this heterogeneity, exact state synchronization is generically unattainable.

Nonetheless, it was shown in \cite{kim2016tac,lee2020aut} that strong coupling (i.e., large $k$) enforces practical synchronization among agents, and the resulting collective behavior is governed by the so-called \emph{blended dynamics}
\begin{align}\label{eq:intro_blended_dynamics}
    \dot{s}=\frac{1}{N}\sum_{i=1}^N f_i(t,s),
\end{align}
provided that the blended dynamics is contractive (in the sense of Assumption~\ref{ass:contractivity}) and the communication graph is undirected and connected.
The right-hand side of \eqref{eq:intro_blended_dynamics} generally differs from any individual $f_i$, which makes the blended dynamics \emph{emergent}; see also \cite{panteley2017tac} for the term \emph{emergent dynamics} and related results under strong coupling. Moreover, it aggregates each $f_i$ into a single vector field, synthesizing local information into global behavior.
This result has been leveraged in applications such as distributed optimization and distributed observer design, where heterogeneity is used as an intentional design tool; see~\cite{lee2022bookchapter} and the references therein.

Despite its importance, the theory of blended dynamics has a limitation: it relies on strong coupling, which can make the emergent behavior fragile under communication constraints. In particular, it has been unknown whether the emergent behavior persists under communication delays, because strong coupling may amplify the errors induced by delayed information. As delays are typically given by physical limits and cannot be adjusted, the use of strong coupling can be impractical when delays are non-negligible.

To address this issue, we propose an alternative model in which the node dynamics of each agent $i=1,\ldots,N$ is slowed down:
\begin{align}\label{eq:system}
\begin{split}
    \dot{x}_i(t) = \varepsilon f_i(\varepsilon t,x_i(t)) + \sum_{j\in \Ncal_i} a_{ij}(x_j(t-d_{ij})-x_i(t)),
\end{split}
\end{align}
where $0<\varepsilon \le 1$ is the parameter that controls the overall evolution speed of agents, and each $d_{ij} \ge 0$ denotes the communication delay from agent~$j$ to agent~$i$.
Analysis on the slow time scale $\tau \coloneqq \varepsilon t$ reveals that slow evolution reduces the delay-induced errors, allowing diffusive coupling perturbed by delays to still enforce practical synchronization.
We show that if $\varepsilon$ is sufficiently small, collective behavior emerges even under large constant communication delays. Moreover, we prove that delays modify the resulting blended dynamics by scaling down its vector field, yielding
\begin{align}\label{eq:blended_dynamics}
    \st' = \frac{1}{1 + \frac{1}{N}\sum_{i,j=1}^N a_{ij}d_{ij}} \cdot\frac{1}{N} \sum_{i=1}^N f_i(\tau,\st),
\end{align}
where the prime denotes the $\tau$-derivative, i.e., $\st'=d\st/d\tau$.
This formula is somewhat surprising, since a naive analogue of \eqref{eq:intro_blended_dynamics} would be $\dot{s} = (1/N) \sum_{i=1}^N \varepsilon f_i(\varepsilon t, s)$, which yields $\st' = (1/N) \sum_{i=1}^N f_i(\tau, \st)$ on the $\tau$-scale.
We also relate our results to the strong-coupling model, showing that emergent behavior induced by strong coupling persists under sufficiently small delays, with the same scaling effect as in \eqref{eq:blended_dynamics}.

\textit{Notation:} For vectors $e_1,\ldots,e_m$, $\mathrm{col}(e_1,\ldots,e_m) \coloneq [e_1^T,\ldots,e_m^T]^T$. Let $\one_N \in \R^N$ denote the vector with all entries equal to one. We use $\|\cdot\|$ for the Euclidean norm of vectors and the induced 2-norm of matrices, and $\|\cdot\|_\infty$ for the infinity norm. For a symmetric positive definite matrix $H$, $\lambda_{\min}(H)$ denotes its smallest eigenvalue and $\|x\|_H\coloneqq\sqrt{x^THx}$. The space of continuous functions from $X$ to $Y$ is denoted by $C(X,Y)$. For a function $x$ defined on $\R$, $x[a,b]$ denotes the restriction of $x$ to $[a,b]$. The upper right-hand Dini derivative operator is denoted by $D^+$.

\section{Intuition on the Modified Blended Dynamics}\label{section:problem_setup}

\subsection{Problem Setup}\label{subsection:problem_setup}

We consider a network of $N$ heterogeneous agents with constant communication delays described by \eqref{eq:system}. Each neighborhood set $\Ncal_i$ defines the communication graph where agent $j$ transmits information to agent $i$ if and only if $j \in \Ncal_i$ (we adopt the convention $i \notin \Ncal_i$). The adjacency matrix $[a_{ij}]\in \R^{N\times N}$ imposes weights on the communication graph and satisfies $a_{ij}>0$ if $j \in \Ncal_i$ and $a_{ij}=0$ otherwise. Delays may be heterogeneous and asymmetric across communication links, and satisfy $0\le d_{ij} \le d$ where $d>0$ can be arbitrarily large. The initial histories satisfy $x_{i}[-d,0] \in C([-d,0],\R^n)$ for $i=1,\ldots,N$.

The goal of this paper is to show that the agents achieve practical synchronization for $0<\varepsilon \ll 1$, and to characterize the modified blended dynamics that governs the resulting collective behavior. First, we state some assumptions on the system's regularity and communication topology.

\begin{assumption}\label{ass:vector_field}
    Each $f_i(t,x_i)$ is continuous in $t$, locally Lipschitz in $x_i$ uniformly in $t$, and $C^1$ in $x_i$. Each $f_i(t,0)$ is uniformly bounded in $t$.
\end{assumption}

\begin{assumption}\label{ass:Lipschitz_continuity}
    For any compact set $K \subseteq \R^n$, there exists a corresponding $L>0$ such that
    \begin{align*}
        \|f_i(t_1,x_i)-f_i(t_2,x_i)\|_\infty \le L|t_1-t_2|, \qquad \forall x_i \in K
    \end{align*}
    for all $t_1,t_2 \in [0,\infty)$ and $i=1,\ldots,N$.
\end{assumption}

Assumption~\ref{ass:vector_field} imposes a mild regularity on each $f_i$. Assumption~\ref{ass:Lipschitz_continuity} additionally requires each $f_i$ to be Lipschitz continuous in $t$ uniformly in $x_i$, on each compact subset of $\R^n$. This ensures that the derivative $\st'(\tau)$ of a solution of \eqref{eq:blended_dynamics} is uniformly continuous when the solution is bounded.

\begin{assumption}\label{ass:graph}
    The communication graph associated with $[a_{ij}] \in \R^{N\times N}$ is strongly connected and balanced.
\end{assumption}

The balanced graph assumption in Assumption~\ref{ass:graph} is imposed only to simplify the expression of the blended dynamics. For general directed graphs containing a spanning tree, the blended dynamics takes the form of a weighted sum of the $f_i$, where the weights are determined by a left eigenvector of the graph Laplacian matrix associated with the zero eigenvalue.

For further analysis, we introduce the slow time scale $\tau = \varepsilon t$. Henceforth, for a signal $p:\R \to \R^m$, we denote its \mbox{$\tau$-scale} representation by a sans-serif symbol $\pt$, defined by $\pt(\tau) \coloneqq p(\tau/\varepsilon)$.
On the $\tau$-scale, \eqref{eq:system} becomes
\begin{align}\label{eq:tau_scale}
\begin{split}
    \xt'_i(\tau) = f_i(\tau,\xt_i(\tau)) + \frac{1}{\varepsilon}\sum_{j \in \Ncal_i}a_{ij}(\xt_j(\tau-\varepsilon d_{ij}) - \xt_i(\tau)),\\
    \xt_{i}[-\varepsilon d,0] \in C([-\varepsilon d,0],\R^n), \qquad i=1,\ldots,N.
\end{split}
\end{align}
This time scale conversion reveals that slowing down the node dynamics on the $t$-scale translates to an increased coupling gain and reduced delays on the $\tau$-scale.

\subsection{Modified Blended Dynamics}

To motivate the modified blended dynamics \eqref{eq:blended_dynamics}, we introduce the following coordinate transformation for \eqref{eq:tau_scale}:
\begin{align}\label{eq:coordinate_change}
    \begin{bmatrix}
        \zt\\
        \wt
    \end{bmatrix}
    =\left(
    \begin{bmatrix}
        \frac{1}{N}\one_N^T\\
        R^T
    \end{bmatrix}\otimes I_n \right)
    \xt, \quad \xt = \one_N \otimes \zt+(R\otimes I_n)\wt,
\end{align}
where $\xt\coloneqq\mathrm{col}(\xt_1,\ldots,\xt_N)$ and $R\in \R^{N\times(N-1)}$ satisfies $R^T\one_N=0$ and $R^TR=I_{N-1}$. Note that $\wt$ measures the disagreement among agents, in the sense that $\wt=0$ if and only if $\xt_1=\cdots=\xt_N$. When $\wt=0$, $\zt$ describes the synchronized behavior of the agents.
Let $R_i$ denote the \mbox{$i$-th} row of $R$. Using \eqref{eq:tau_scale} and Assumption~\ref{ass:graph}, the $\zt$-dynamics is given by
\begin{align}
    \begin{split}\label{eq:z_dynamics}
        \zt'(\tau) &= \frac{1}{N}\sum_{i=1}^N f_i(\tau,\zt(\tau)+(R_i\otimes I_n)\wt(\tau))\\
        & + \frac{1}{N}\sum_{i,j=1}^Na_{ij}\frac{\zt(\tau-\varepsilon d_{ij})-\zt(\tau)}{\varepsilon}\\
        &+\frac{1}{N}\sum_{i,j=1}^Na_{ij}(R_j \otimes I_n)\frac{\wt(\tau -\varepsilon d_{ij})-\wt(\tau)}{\varepsilon},
    \end{split}
\end{align}
where we used $\sum_{i,j=1}^Na_{ij}R_j=\sum_{i,j=1}^Na_{ij}R_i$. To focus on the case when all the agents are synchronized, we put $\wt \equiv 0$ and $\zt=\hat{\zt}$ in \eqref{eq:z_dynamics} to get the reduced model given by
\begin{align}\label{eq:z_tilde_dynamics}
\begin{split}
    \hat{\zt}'(\tau) &= \frac{1}{N}\sum_{i=1}^N f_i(\tau,\hat{\zt}(\tau))+ \frac{1}{N}\!\sum_{i,j=1}^Na_{ij}\frac{\hat{\zt}(\tau-\varepsilon d_{ij})-\hat{\zt}(\tau)}{\varepsilon}.
\end{split}
\end{align}

However, \eqref{eq:z_tilde_dynamics} is an intermediate model that still depends on $\varepsilon$. The key observation is that, for small $\varepsilon$, the last term in \eqref{eq:z_tilde_dynamics} behaves like a derivative of $\hat{\zt}$ multiplied by a constant, which follows from the approximation
\begin{align}\label{eq:approximation}
    \frac{\hat{\zt}(\tau-\varepsilon d_{ij})-\hat{\zt}(\tau)}{\varepsilon} \approx -d_{ij}\hat{\zt}'(\tau).
\end{align}
This motivates us to define the blended dynamics of \eqref{eq:system} or \eqref{eq:tau_scale} by replacing the $\hat{\zt}(\tau-\varepsilon d_{ij})$ term in \eqref{eq:z_tilde_dynamics} with its first-order approximation at $\tau$, which leads to \eqref{eq:blended_dynamics}. Although this is a heuristic derivation, we rigorously justify in Section~\ref{section:proof_of_theorem_1} that \eqref{eq:blended_dynamics} is indeed the governing equation of the synchronized behavior of \eqref{eq:tau_scale} under suitable assumptions.

\section{Main Result}\label{section:main_result}
Let $\bar{f}(\tau,\st)\coloneq \frac{1}{N}\sum_{i=1}^N f_i(\tau,\st)$.
Then \eqref{eq:blended_dynamics} can be written as $\st'=\frac{1}{1+A}\bar{f}(\tau,\st)$, where $A\coloneq \frac{1}{N}\sum_{i,j=1}^N a_{ij}d_{ij}$.

\begin{assumption}\label{ass:contractivity}
    There exist $\gamma>0$ and a symmetric positive definite matrix $H \in \R^{n\times n}$ such that
    \begin{align*}
        H \frac{\partial \bar{f}}{\partial \st}(\tau,\st) + \frac{\partial \bar{f}}{\partial \st}^T\!(\tau,\st)H \preceq -\gamma H
    \end{align*}
    for all $\tau \in [0,\infty)$ and $\st \in \R^n$.
\end{assumption}

Assumption~\ref{ass:contractivity} imposes the contraction property on \eqref{eq:blended_dynamics} so that any two solutions of \eqref{eq:blended_dynamics} with different initial conditions approach each other at an exponential rate \cite{lohmiller1998aut}. It allows \eqref{eq:blended_dynamics} to describe the agent trajectories on the infinite time interval, even under heterogeneity. Combined with Assumption~\ref{ass:vector_field}, it yields other useful properties, which are summarized in the following lemma.

\begin{lemma}\label{lem:s_dynamics}
    Suppose Assumptions~\ref{ass:vector_field} and \ref{ass:contractivity} hold. Let
    \begin{align*}
        S\coloneqq \bigg\{\st \in \R^n:\|\st\|_H \le \frac{2(1+\bar{a}d)}{\gamma}\sup_{\tau \ge 0}\|\bar{f}(\tau,0)\|_H \bigg\},
    \end{align*}
    where $\bar{a}\coloneqq \frac{1}{N}\sum_{i,j=1}^N a_{ij}$. Then for any $\st(0) \in S$ and $d_{ij} \in [0,d]$, the solution $\st(\tau)$ of \eqref{eq:blended_dynamics} remains in $S$ and is Lipschitz continuous on $[0,\infty)$ with a uniform Lipschitz constant.
\end{lemma}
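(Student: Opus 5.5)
The plan is a Lyapunov argument with $V(\st)=\|\st\|_H^2$ along \eqref{eq:blended_dynamics}, written as $\st'=\frac{1}{1+A}\bar f(\tau,\st)$. Since $d_{ij}\in[0,d]$, we have $0\le A\le \bar a d$. Set $\beta\coloneqq\sup_{\tau\ge0}\|\bar f(\tau,0)\|_H$, which is finite by Assumption~\ref{ass:vector_field}.

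First I will turn Assumption~\ref{ass:contractivity} into a one-sided growth bound. Write $\bar f(\tau,\st)-\bar f(\tau,0)=\int_0^1 \frac{\partial\bar f}{\partial\st}(\tau,\theta\st)\,d\theta\,\st$, which is valid since $\bar f$ is $C^1$ in $\st$. The contraction inequality then gives
\begin{align*}
\st^TH\bigl(\bar f(\tau,\st)-\bar f(\tau,0)\bigr)\le -\tfrac{\gamma}{2}\|\st\|_H^2.
\end{align*}
By Cauchy--Schwarz in the $H$-inner product,
\begin{align*}
\tfrac{d}{d\tau}\|\st\|_H^2=\tfrac{2}{1+A}\st^TH\bar f(\tau,\st)\le \tfrac{2}{1+A}\bigl(-\tfrac{\gamma}{2}\|\st\|_H^2+\beta\|\st\|_H\bigr).
\end{align*}
The right-hand side is negative whenever $\|\st\|_H>2\beta/\gamma$.

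Next comes forward invariance. The radius defining $S$ is $r\coloneqq 2(1+\bar a d)\beta/\gamma$, and $r\ge 2\beta/\gamma$. On the boundary $\|\st\|_H=r$ the derivative of $V$ is nonpositive: if $\beta>0$ it is strictly negative, and if $\beta=0$ then $S=\{0\}$, which is trivially invariant because $\bar f(\tau,0)=0$. A standard comparison or Dini-derivative argument then shows that $\|\st(\tau)\|_H$ can never exceed $r$ once $\st(0)\in S$. Local Lipschitzness gives local existence and uniqueness, and boundedness rules out finite escape time. So the solution exists on $[0,\infty)$ and stays in $S$.

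Finally, Lipschitz continuity. $S$ is compact and does not depend on the particular $d_{ij}$. Local Lipschitzness of $\bar f$, uniform in $\tau$, gives a constant $L_S$ with $\|\bar f(\tau,\st)\|\le \|\bar f(\tau,0)\|+L_S\|\st\|$ on $S$. This is bounded by a constant $M$ independent of $\tau$, of the initial condition and of the $d_{ij}$. Hence $\|\st'\|\le M/(1+A)\le M$, and $\st$ is Lipschitz with constant $M$ uniformly.

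The only delicate point is the middle step: making sure the chosen radius dominates $2\beta/\gamma$ so that the boundary of $S$ is not crossed. Here the factor $(1+\bar a d)$ is harmless because it is at least $1$. I also need to handle the degenerate case $\beta=0$, which I will do separately as above.
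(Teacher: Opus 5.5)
Your proposal is correct and follows essentially the same route as the paper. Both use the Lyapunov function $\|\st\|_H^2$, the contraction bound $\st^TH(\bar f(\tau,\st)-\bar f(\tau,0))\le-\frac{\gamma}{2}\|\st\|_H^2$ together with Cauchy--Schwarz to get invariance of $S$, and then boundedness of $\bar f$ on the compact set $S$ for a uniform Lipschitz constant. The one difference is cosmetic: the paper bounds $\frac{1}{1+\bar a d}\le\frac{1}{1+A}\le 1$ so that the sign-change threshold equals the radius of $S$, whereas you keep $\frac{1}{1+A}$ as a common factor and obtain the smaller threshold $2\beta/\gamma$. Your claim that the boundary derivative is strictly negative holds only when $\bar a d>0$, but this is harmless, since your invariance argument only needs strict negativity outside $S$.
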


\begin{proof}
    Assumption~\ref{ass:vector_field} guarantees that $\sup_{\tau \ge 0}\|\bar{f}(\tau,0)\|_H$ is finite. Let $\st(\tau)$ be a solution of \eqref{eq:blended_dynamics} with $\st(0) \in S$. For any $d_{ij} \in [0,d]$, we have
    \begin{align*}
        \frac{d}{d\tau}\|\st\|^2_H &= \frac{2}{1+A}\left[ \st^TH (\bar{f}(\tau,\st)-\bar{f}(\tau,0))+\st^TH\bar{f}(\tau,0)\right]\\
        &\le  \frac{1}{1+A}\left( -\gamma \|\st\|^2_H + 2\|\st\|_H\sup_{\tau \ge 0}\|\bar{f}(\tau,0)\|_H \right)\\
        & \le \|\st\|_H\left(-\frac{\gamma}{1+\bar{a}d}\|\st\|_H + 2\sup_{\tau \ge 0}\|\bar{f}(\tau,0)\|_H \right),
    \end{align*}
    where the first inequality follows from Assumption~\ref{ass:contractivity}. This shows $\st(\tau) \in S$ for all $\tau \ge 0$. Since $S$ is bounded, the derivative $\st'(\tau)\!=\!\frac{1} {1+A}\bar{f}(\tau,\st(\tau))$ is uniformly bounded for all $\tau \ge 0$ and $d_{ij} \in [0,d]$. Hence, $\st(\tau)$ is Lipschitz continuous for all $d_{ij} \in [0,d]$ with a uniform Lipschitz constant.
\end{proof}

We now state the main result of this paper.

\begin{theorem}\label{thm:main_result}
    Suppose Assumptions~\ref{ass:vector_field}--\ref{ass:contractivity} hold. Let $K \subseteq \R^n$ be compact and let $d>0$. Then there exist $\beta_0 > 0$ and $\varepsilon^* \in (0,1]$ such that the following holds for $0<\varepsilon\le\varepsilon^*$: for any constant delays $d_{ij}\in [0,d]$, every solution of \eqref{eq:system} with $x_i[-d,0] \in C([-d,0],K)$ for all $i=1,\ldots,N$ satisfies
    \begin{align*}
        \limsup_{t \to \infty} \|x_i(t)-\st(\varepsilon t)\| \le \beta_0 \varepsilon, \qquad i=1,\ldots,N,
    \end{align*}
    where $\st(\tau)$ is any solution of \eqref{eq:blended_dynamics}.
\end{theorem}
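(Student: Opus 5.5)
We work on the $\tau$-scale \eqref{eq:tau_scale} with the coordinates $(\zt,\wt)$ of \eqref{eq:coordinate_change}. The goal is to show that $\wt$ is $O(\varepsilon)$ after a transient and that $\zt$ tracks a solution of \eqref{eq:blended_dynamics} up to $O(\varepsilon)$. By Assumption~\ref{ass:contractivity}, any two solutions of \eqref{eq:blended_dynamics} converge to each other exponentially. So it suffices to compare $\zt$ with one particular solution, say $\st$ with $\st(0)\in S$, which is bounded and uniformly Lipschitz by Lemma~\ref{lem:s_dynamics}.

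\textbf{Step 1: a priori boundedness.} First show that trajectories starting from $K$ stay in a compact set $\Omega$ independent of $\varepsilon$ and $d_{ij}$. Here we would not use $\zt$ directly, since its equation contains the $1/\varepsilon$ difference quotients. Instead we would use the "delay-compensated" variable
\begin{align*}
\vt(\tau)\coloneqq \zt(\tau)+\frac{1}{N}\sum_{i,j}a_{ij}\int_{\tau-\varepsilon d_{ij}}^{\tau}\frac{\xt_j(\sigma)}{\varepsilon}\,d\sigma .
\end{align*}
By balancedness, its derivative is exactly $\frac{1}{N}\sum_i f_i(\tau,\xt_i(\tau))$; the $1/\varepsilon$ terms cancel. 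Note that $\vt=(1+A)\zt+O(\sup\|\wt\|)$ once $\xt$ is nearly constant over windows of length $\varepsilon d$. This is the rigorous version of \eqref{eq:approximation}: $\vt'=\bar f(\tau,\zt)+O(\|\wt\|)$ with $\vt\approx(1+A)\zt$, which gives \eqref{eq:blended_dynamics}.

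\textbf{Step 2: the disagreement.} The $\wt$-dynamics reads $\varepsilon\wt'=-(R^T L R\otimes I)\wt+(\text{delay terms})+\varepsilon(\text{bounded})$. The delay terms $\xt_j(\tau-\varepsilon d_{ij})-\xt_j(\tau)$ are $O(\varepsilon d\,\sup\|\xt'\|)$ on the fast scale. On the original $t$-scale, \eqref{eq:system} is a linear delayed consensus system (balanced, strongly connected) forced by $\varepsilon f_i$. Consensus with arbitrary constant delays is known to be robust, so the plan is a Lyapunov–Krasovskii or Razumikhin argument in $t$:
\begin{align*}
V=\|\wt\|^2+\sum_{i,j}a_{ij}\int_{t-d_{ij}}^{t}\|x_j(\sigma)-z(\sigma)\|^2d\sigma .
\end{align*}
Alternatively, we would use a max-type (infinity-norm) contraction of the convex-hull diameter, which holds for delayed consensus with positive weights. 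Either route should give ultimate $\|\wt\|\le c\varepsilon$ after a transient of length $O(1)$ in $t$, i.e. $O(\varepsilon)$ in $\tau$, while states stay in $\Omega$. We would choose the hull-diameter approach: $\max_i$ and $\min_i$ of each coordinate over the window $[t-d,t]$ are monotone up to an $\varepsilon\|f\|$ drift, and strong connectivity gives contraction over time windows of length about $N d$. This yields $\mathrm{diam}=O(\varepsilon)$ and also boundedness, except that the drift of the center requires Step 1.

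\textbf{Step 3: closing the loop.} Compare $\vt$ with $(1+A)\st$. Set $e=\vt-(1+A)\st$ and use the $H$-norm with Assumption~\ref{ass:contractivity}:
\begin{align*}
D^+\|e\|_H\le-\frac{\gamma}{2(1+\bar a d)}\|e\|_H+c(\|\wt\|+\varepsilon).
\end{align*}
The $\varepsilon$ term comes from replacing the window averages by $\zt$; it is controlled by $\sup\|\xt'\|$, i.e. by $\|\wt\|/\varepsilon$ plus a bounded term. This is fine because $\|\wt\|=O(\varepsilon)$. Then $\limsup\|e\|=O(\varepsilon)$, so $\limsup\|\xt_i-\st\|\le\beta_0\varepsilon$, and by contraction the bound holds for any solution $\st$.

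\textbf{Main obstacle.} The hard part is the coupling between Steps 1 and 2. The bound on $\|\xt'\|$ needs $\|\wt\|=O(\varepsilon)$, and the $\wt$ estimate needs states confined to a compact set where the Lipschitz constants of $f_i$ hold. We would resolve this with a bootstrap (continuation) argument on a candidate compact set $\Omega$ built from a sublevel set of $\|\vt\|_H$. The uniform-in-delay constants come from $d_{ij}\le d$. The initial transient, where $\wt$ is not small, must be handled so that $\vt$ moves only $O(\varepsilon)$ during it.
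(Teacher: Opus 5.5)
Your route works, and its central step differs genuinely from the paper's. The synchronization part is essentially the same as in the paper:

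\begin{itemize}
\item your window hull diameter is exactly $E(\xt_\tau)$;
\item the ``monotone up to drift'' property is Lemma~\ref{lem:evolution_bound};
\item your continuation argument plays the role of the exit time $\overline{T}_\varepsilon$.
\end{itemize}

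The paper imports the delay-uniform contraction of integrator consensus and packages it into the Krasovskii-type functional $V$ and Lemma~\ref{lem:practical_consensus}. Proving the max/min contraction directly and adding the forcing by linearity also works, provided the contraction factor is uniform over $d_{ij}\in[0,d]$.

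The difference lies in how the stiff $1/\varepsilon$ terms in \eqref{eq:z_dynamics} are handled. The paper stays with $\zt$ and pays with three estimates: practical synchronization of the difference quotients $\qt^h$, a first-order remainder bound for $\st$, and the Krasovskii term $W_2$ that absorbs $\sqrt{W_1(\tau-\varepsilon d_{ij})}$. Your $\vt$ is an exact invariant of balanced delayed diffusive coupling, so $\vt'=\frac{1}{N}\sum_i f_i(\tau,\xt_i)$ has no stiff term. Moreover, $\vt/(1+A)$ is a convex combination of states in the window $[\tau-\varepsilon d,\tau]$. This gives $\|\vt-(1+A)\zt\|_\infty\le A\,E(\xt_\tau)$ and $\|\xt_i(\tau)-\vt(\tau)/(1+A)\|_\infty\le E(\xt_\tau)$. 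Assumption~\ref{ass:contractivity} then immediately yields $D^+\|e\|_H\le-\frac{\gamma}{2(1+A)}\|e\|_H+C\,E(\xt_\tau)$.

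Your route buys three things:
\begin{itemize}
\item it is shorter;
\item it makes the factor $(1+A)^{-1}$ transparent, since the integrals count information in transit;
\item it never uses Assumption~\ref{ass:Lipschitz_continuity}, which the paper needs only for the bound $B_1$ in the $\qt^h$ dynamics and for the Lipschitz-in-$\tau$ estimate of $F(\tau,\st,0)$.
\end{itemize}
The paper's route, in exchange, does not rely on an exact cancellation identity.

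Two statements should be repaired, though neither is a real gap.

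\begin{itemize}
\item In Step~3 you control the replacement error through $\sup\|\xt'\|\lesssim\|\wt\|/\varepsilon+O(1)$. This is circular for large $\bar a d$: $\xt_i'$ contains $\xt_j(\tau-\varepsilon d_{ij})-\xt_i(\tau)$, which includes the variation of $\zt$ over a window and is not bounded by the instantaneous $\|\wt\|$. Use the window diameter directly, as above; $\xt'$ is never needed.
\item The transient in Step~2 lasts $O(\log(1/\varepsilon))$ in $t$, not $O(1)$, so $\vt$ moves $O(\varepsilon\log(1/\varepsilon))$ during it. This is harmless. To start the bootstrap you only need the states to remain in a fixed compact set on a fixed interval $[0,T]$, which Lemma~\ref{lem:evolution_bound} provides.
\end{itemize}
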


The proof of Theorem~\ref{thm:main_result} is given in Section~\ref{section:proof_of_theorem_1}. Theorem~\ref{thm:main_result} states that even under large constant communication delays, sufficiently small $\varepsilon$ ensures practical synchronization of the agents, with the synchronized trajectories approximately following the slow signal $\st(\varepsilon t)$, not $\st(t)$.
Hence, emergent behavior persists in the presence of delays when each agent's node dynamics is sufficiently slow to induce practical synchronization.
This result also provides an alternative demonstration of the idea that ``enforcing synchronization against heterogeneity leads to emergent behavior," previously illustrated in the context of strong coupling \cite{lee2022bookchapter}.

We emphasize that the collective behavior is governed by the modified blended dynamics \eqref{eq:blended_dynamics}, whose vector field is a scaled-down version of that of \eqref{eq:intro_blended_dynamics}.
This can be interpreted as delays making the collective behavior slower than in the delay-free case. As shown in \eqref{eq:approximation}, this phenomenon arises from the structure of diffusive coupling rather than from agent heterogeneity. In this regard, the slowdown of collective behavior caused by communication delays is a general feature of diffusive coupling, which may also be observed in homogeneous agents under suitable assumptions. This provides useful intuition for analyzing the collective behavior of multi-agent systems under delays.

For the scaling factor $(1+A)^{-1}$ in \eqref{eq:blended_dynamics} to be close to one, $A$ must be small. This shows the trade-off between coupling weights and delay lengths, as larger $d_{ij}$ requires smaller corresponding $a_{ij}$ to keep $A$ small. Furthermore, it highlights the advantage of sparse graphs where each agent has a relatively small number of neighbors, which reduces the number of summands contributing to $A$.

If each $f_i$ is autonomous, i.e., $f_i=f_i(x_i)$, then every solution of \eqref{eq:blended_dynamics} is a time-rescaled version of a solution of \eqref{eq:intro_blended_dynamics}.
This explains the delay robustness of distributed algorithms based on the autonomous blended dynamics. For example, in distributed optimization, one often designs the delay-free blended dynamics \eqref{eq:intro_blended_dynamics} to have a globally asymptotically stable equilibrium point corresponding to a minimizer of an objective function \cite{lee2022aut}. Although delays alter the blended dynamics from \eqref{eq:intro_blended_dynamics} to \eqref{eq:blended_dynamics}, its solutions converge to the same minimizer in this case.

Finally, we relate Theorem~\ref{thm:main_result} to the strong-coupling model. Consider a network of $N$ agents ($i=1,\ldots,N$):
\begin{align}\label{eq:high_gain_with_delay}
    \dot{x}_i(t)=f_i(t,x_i(t))+k\sum_{j\in \Ncal_i}a_{ij}(x_j(t-\tilde{d}_{ij})-x_i(t)),
\end{align}
where we interpret $\tilde{d}_{ij}=d_{ij}/k$ to apply Theorem~\ref{thm:main_result} with $\varepsilon = 1/k$; compare \eqref{eq:high_gain_with_delay} with \eqref{eq:tau_scale}. The corresponding blended dynamics is given by
\begin{align}\label{eq:high_gain_blended_dynamics}
    \dot{s}=\frac{1}{1+\frac{1}{N}\sum_{i,j=1}^N ka_{ij}\tilde{d}_{ij}}\cdot\frac{1}{N}\sum_{i=1}^N f_i(t,s).
\end{align}
For \eqref{eq:high_gain_with_delay}, Theorem~\ref{thm:main_result} gives the following corollary, illustrating that emergent behavior is robust to sufficiently small delays.
\begin{corollary}\label{cor:high_gain}
    Suppose Assumptions~\ref{ass:vector_field}--\ref{ass:contractivity} hold. Let $K \subseteq \R^n$ be compact and let $d>0$. Then there exist $\beta_0 > 0$ and $k^*>0$ such that the following holds for $k\ge k^*$: for any constant delays $\tilde{d}_{ij} \in [0,d/k]$, every solution of \eqref{eq:high_gain_with_delay} with $x_i[-d/k,0] \in C([-d/k,0],K)$ for all $i=1,\ldots,N$ satisfies 
    \begin{align*}
        \limsup_{t \to \infty} \|x_i(t)-s(t)\| \le \beta_0/k, \qquad i=1,\ldots,N,
    \end{align*}
    where $s(t)$ is any solution of \eqref{eq:high_gain_blended_dynamics}.
\end{corollary}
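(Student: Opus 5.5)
The corollary will follow from Theorem~\ref{thm:main_result} by a time rescaling; no new analysis is needed. Set $\varepsilon = 1/k$ and $d_{ij} \coloneqq k\tilde{d}_{ij}$. Then $\tilde{d}_{ij}\in[0,d/k]$ is equivalent to $d_{ij}\in[0,d]$.

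\textbf{Step 1 (rescaling).} Given a solution $x_i(t)$ of \eqref{eq:high_gain_with_delay}, define $\xt_i(\tau)\coloneqq x_i(\tau)$ and identify the variable $t$ of \eqref{eq:high_gain_with_delay} with the slow variable $\tau$ of \eqref{eq:tau_scale}. Then
\begin{align*}
\xt_i'(\tau)=f_i(\tau,\xt_i(\tau))+\frac{1}{\varepsilon}\sum_{j\in\Ncal_i}a_{ij}\big(\xt_j(\tau-\varepsilon d_{ij})-\xt_i(\tau)\big),
\end{align*}
because $k\tilde{d}_{ij}=d_{ij}$ gives $\tilde{d}_{ij}=\varepsilon d_{ij}$. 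This is exactly \eqref{eq:tau_scale}. The initial history $x_i[-d/k,0]=\xt_i[-\varepsilon d,0]$ takes values in $K$.

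Now define the fast-time signal $x^\varepsilon_i(t)\coloneqq \xt_i(\varepsilon t)$. Since \eqref{eq:tau_scale} is the $\tau$-scale form of \eqref{eq:system}, $x^\varepsilon_i$ solves \eqref{eq:system} with delays $d_{ij}\in[0,d]$. Its initial history $x^\varepsilon_i[-d,0]$ is continuous with values in $K$.

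\textbf{Step 2 (blended dynamics).} With $d_{ij}=k\tilde{d}_{ij}$, the constant $A=\frac1N\sum a_{ij}d_{ij}$ equals $\frac1N\sum k a_{ij}\tilde d_{ij}$. Hence \eqref{eq:high_gain_blended_dynamics} in the variable $t$ is literally \eqref{eq:blended_dynamics} in the variable $\tau$, so every solution $s$ of \eqref{eq:high_gain_blended_dynamics} is a solution $\st$ of \eqref{eq:blended_dynamics}.

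\textbf{Step 3 (apply the theorem).} Take $\beta_0,\varepsilon^*$ from Theorem~\ref{thm:main_result} for the given $K$ and $d$, and set $k^*\coloneqq 1/\varepsilon^*$. For $k\ge k^*$ we have $\varepsilon\le\varepsilon^*$, so
\begin{align*}
\limsup_{t\to\infty}\|x^\varepsilon_i(t)-s(\varepsilon t)\|\le \beta_0\varepsilon .
\end{align*}
Since $x^\varepsilon_i(t)=x_i(\varepsilon t)$, substituting $t'=\varepsilon t$ leaves the limsup unchanged and gives $\limsup_{t'\to\infty}\|x_i(t')-s(t')\|\le\beta_0/k$.

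The only point requiring care is uniformity. The theorem provides $\beta_0$ and $\varepsilon^*$ that hold for all delays in $[0,d]$ simultaneously. Without this, the map $k\mapsto d_{ij}=k\tilde d_{ij}$ could cause trouble, since the delays change with $k$. Because the constants do not depend on the particular delays, the same $\beta_0$ and $k^*$ work for every admissible choice of $\tilde d_{ij}$.
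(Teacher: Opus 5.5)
Your proposal is correct and follows the paper's route: set $\varepsilon = 1/k$ and $d_{ij} = k\tilde{d}_{ij}$, observe that \eqref{eq:high_gain_with_delay} is literally \eqref{eq:tau_scale} and \eqref{eq:high_gain_blended_dynamics} is literally \eqref{eq:blended_dynamics}, then invoke Theorem~\ref{thm:main_result} with $k^* = 1/\varepsilon^*$. Your explicit passage back to the $t$-scale model \eqref{eq:system} and your remark that the theorem's constants are uniform over $d_{ij}\in[0,d]$ spell out what the paper leaves implicit.
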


\section{Simulations}\label{section:simulations}

\begin{figure}[t]
\centering

\subfloat[\label{subfig:large_k_small_delay}]{%
\includegraphics[width=0.49\linewidth]{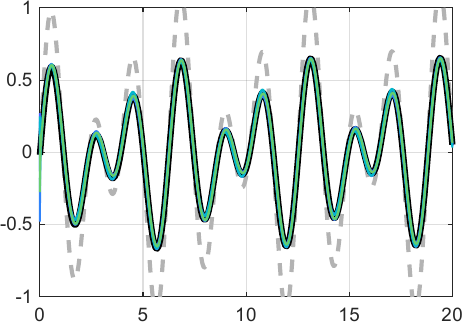}}
\hfill
\subfloat[\label{subfig:large_k_large_delay}]{%
\includegraphics[width=0.49\linewidth]{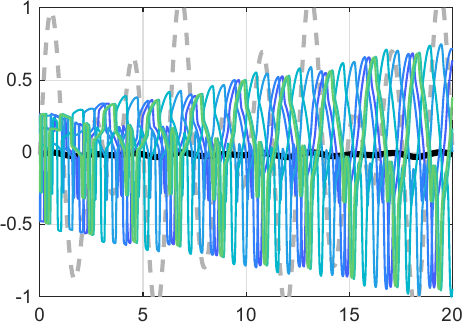}}

\caption{Trajectories of six agents governed by \eqref{eq:high_gain_with_delay} with coupling gain $k=100$. The black solid line shows the solution of \eqref{eq:high_gain_blended_dynamics}, and the gray dashed line shows the solution of \eqref{eq:intro_blended_dynamics}. Delays are i.i.d. in each case. (a)~$\tilde{d}_{ij}\sim\mathrm{Uniform}(0,0.01)$. (b)~$\tilde{d}_{ij}\sim\mathrm{Uniform}(0,1)$.}
\label{fig:strong_coupling}

\end{figure}

\begin{figure}[t]
\centering

\subfloat[\label{subfig:small_e_large_delay}]{%
\includegraphics[width=0.49\linewidth]{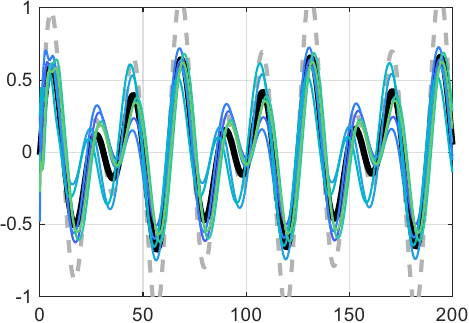}}
\hfill
\subfloat[\label{subfig:tiny_e_large_delay}]{%
\includegraphics[width=0.49\linewidth]{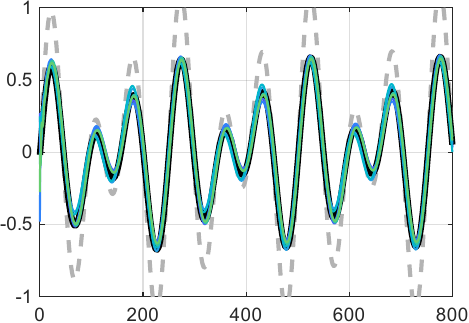}}

\caption{Trajectories of six agents governed by \eqref{eq:system} under the same delays as in Fig.~\ref{fig:strong_coupling}(b). The black solid line shows $\st(\varepsilon t)$ where $\st(\tau)$ is the solution of \eqref{eq:blended_dynamics}, and the gray dashed line shows $s(\varepsilon t)$ where $s(t)$ is the solution of \eqref{eq:intro_blended_dynamics}. Note the scaling of the time axis. (a)~$\varepsilon=0.1$. (b)~$\varepsilon=0.025$.}
\label{fig:slow_evolution}

\end{figure}

Consider a network of six scalar agents with
\begin{align*}
    f_1(t,x_1)&=-x_1-2\sin{t},&f_2(t,x_2)&=-2x_2+8\cos{2t},\\
    f_3(t,x_3)&=x_3+2\sin{t},&f_4(t,x_4)&=-3x_4+12\cos{3t},\\
    f_5(t,x_5)&=2x_5,&f_6(t,x_6)&=x_6-2\cos{2t}.
\end{align*}
The communication graph is a directed cycle with information flow $1 \leftarrow 2,2 \leftarrow 3,\ldots,6\leftarrow 1$ and weights $a_{ij}=2$ if $j \in \Ncal_i$ and $a_{ij}=0$ otherwise. Each agent is initialized with a constant history sampled uniformly from $(-0.5,0.5)$.
Observe that $\bar{f}(t,s)=-\frac{1}{3}s + \cos{2t}+2\cos{3t}$ satisfies Assumption~\ref{ass:contractivity}.
We initialize \eqref{eq:intro_blended_dynamics}, \eqref{eq:blended_dynamics} and \eqref{eq:high_gain_blended_dynamics} at the average of the initial states $x_i(0)$ as a baseline choice.

We first examine the effect of communication delays in the strong-coupling model \eqref{eq:high_gain_with_delay}. Fig.~\ref{fig:strong_coupling} shows the trajectories of six agents governed by \eqref{eq:high_gain_with_delay} with $k=100$ and different delay ranges; the delays $\tilde{d}_{ij}$ are sampled i.i.d. from $\mathrm{Uniform}(0,0.01)$ in case (a) and $\mathrm{Uniform}(0,1)$ in case (b). In case (a), the solution of the modified blended dynamics \eqref{eq:high_gain_blended_dynamics} (black solid line) accurately captures the synchronized motion of the agents. In contrast, the agents fail to synchronize in case (b) due to the non-negligible delays.

Next, we simulate the slow-evolution model \eqref{eq:system} using the same sampled delays as in Fig.~\ref{fig:strong_coupling}(b).
Fig.~\ref{fig:slow_evolution} shows the trajectories of six agents governed by \eqref{eq:system} with $\varepsilon=0.1$ in case (a) and $\varepsilon=0.025$ in case (b). Although the delays are non-negligible, decreasing $\varepsilon$ leads to practical synchronization, where the synchronized behavior is governed by the modified blended dynamics \eqref{eq:blended_dynamics}.

Finally, in all cases shown in Fig.~\ref{fig:strong_coupling} and \ref{fig:slow_evolution}, the solution of the delay-free blended dynamics \eqref{eq:intro_blended_dynamics} (gray dashed line) fails to capture the collective behavior of the agents.

\section{Proof of Theorem~\ref{thm:main_result}}\label{section:proof_of_theorem_1}

Although the proof of Theorem~\ref{thm:main_result} constitutes a substantial part of our contribution, we present only the key ideas and the main steps due to space limitations. Complete proofs of the following lemmas and Theorem~\ref{thm:main_result} can be found in the Appendix. Throughout this section, the symbols $\xt$, $\zt$, and $\wt$ are related by \eqref{eq:coordinate_change} and all Dini derivatives are taken with respect to $\tau$. For a trajectory $\pt:\R \to \R^m$ on the $\tau$-scale, we define the history segment $\pt_\tau \in C([-\varepsilon d,0],\R^m)$ by $\pt_\tau(\theta)\coloneqq \pt(\tau+\theta)$ for $\theta \in [-\varepsilon d,0]$.

\subsection{Useful Lemmas}
We first introduce lemmas that are frequently used in the analysis. For $\yt \coloneqq \mathrm{col}(\yt_1,\ldots,\yt_N)\in C([-\varepsilon d,0],\R^{Nn})$ where $\yt_i \coloneq \mathrm{col}(\yt_i^1,\ldots,\yt_i^n)$, $i=1,\ldots,N$, define
\begin{align*}
    M^{\ell}(\yt) \coloneq \max_{\substack{i=1,\ldots,N \\ \theta \in [-\varepsilon d,0]}} \yt_i^{\ell}(\theta),\quad m^{\ell}(\yt) \coloneq \min_{\substack{i=1,\ldots,N \\ \theta \in [-\varepsilon d,0]}}\yt_i^{\ell}(\theta),
\end{align*}
for $\ell=1,\ldots,n$. Also define
\begin{align*}
    E(\yt) \coloneq \max_{\ell = 1,\ldots,n}\left\{ M^{\ell}(\yt) - m^{\ell}(\yt) \right\}.
\end{align*}
The functional $E$ measures disagreement among agent histories, playing a role similar to that of $\wt$. The following lemma summarizes useful properties of $E$.
\begin{lemma}\label{lem:property_of_E}
    The functional $E$ is a seminorm such that $E(\yt) \allowbreak \le 2\sup_{\theta \in [-\varepsilon d,0]}\|\yt(\theta)\|_\infty$ for all $\yt \in C([-\varepsilon d,0],\R^{Nn})$. Also, $\|\wt(\tau)\| \le \sqrt{Nn} E(\xt_\tau)$ for any $\xt:\R \to \R^{Nn}$.
\end{lemma}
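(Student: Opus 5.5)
The proof splits into three parts: the seminorm property, the bound $E(\yt)\le 2\sup_\theta\|\yt(\theta)\|_\infty$, and the bound on $\|\wt(\tau)\|$. Each one follows from the definitions of $M^\ell$, $m^\ell$, and from the structure of $R$.

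\emph{Seminorm and first bound.} Nonnegativity is immediate, since $M^\ell(\yt)\ge m^\ell(\yt)$. For absolute homogeneity, take $c\ge 0$. Then $M^\ell(c\yt)=cM^\ell(\yt)$ and $m^\ell(c\yt)=cm^\ell(\yt)$. For $c<0$, the max and min swap roles: $M^\ell(c\yt)=c\,m^\ell(\yt)$ and $m^\ell(c\yt)=cM^\ell(\yt)$. In both cases $M^\ell-m^\ell$ is multiplied by $|c|$, so $E(c\yt)=|c|E(\yt)$.

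For subadditivity, the max over $(i,\theta)$ of a sum is at most the sum of the maxes, so $M^\ell(\yt+\zt)\le M^\ell(\yt)+M^\ell(\zt)$. Similarly $m^\ell(\yt+\zt)\ge m^\ell(\yt)+m^\ell(\zt)$. Subtracting gives the inequality for each $\ell$, and taking the max over $\ell$ of a sum is at most the sum of the maxes.

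For the bound, write $\sigma\coloneqq\sup_\theta\|\yt(\theta)\|_\infty$. Every entry $\yt_i^\ell(\theta)$ lies in $[-\sigma,\sigma]$, so $M^\ell-m^\ell\le 2\sigma$.

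\emph{Bound on $\wt$.} Fix $\tau$ and set $\xt_i=\xt_i(\tau)$. Since $R^T\one_N=0$, for any vector $c\in\R^n$ we have $\wt(\tau)=(R^T\otimes I_n)(\xt(\tau)-\one_N\otimes c)$. The matrix $R$ has orthonormal columns, so $\|R^T\otimes I_n\|\le 1$, and hence $\|\wt(\tau)\|\le\|\xt(\tau)-\one_N\otimes c\|$.

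Choose $c^\ell\coloneqq m^\ell(\xt_\tau)$. Because $\xt_\tau(0)=\xt(\tau)$, each entry satisfies $0\le \xt_i^\ell(\tau)-c^\ell\le M^\ell(\xt_\tau)-m^\ell(\xt_\tau)\le E(\xt_\tau)$. The vector $\xt(\tau)-\one_N\otimes c$ has $Nn$ entries, each bounded by $E(\xt_\tau)$ in absolute value, so its Euclidean norm is at most $\sqrt{Nn}\,E(\xt_\tau)$. Choosing the midrange value for $c^\ell$ would give a factor $\tfrac12$, but this is not needed.

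None of these steps is a real obstacle. The only point needing care is that $\wt$ is evaluated at time $\tau$, which corresponds to $\theta=0$ and is therefore included in the history window over which $M^\ell$ and $m^\ell$ are taken.
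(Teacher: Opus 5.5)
Your proposal is correct and follows the paper's argument: the paper also removes the common component using $R^T\one_N=0$, applies $\|R^T\otimes I_n\|=1$ and $\|v\|\le\sqrt{Nn}\|v\|_\infty$ on $\R^{Nn}$, and bounds each entry of $\xt(\tau)-\one_N\otimes c$ by $E(\xt_\tau)$. The only differences are that the paper takes $c=\zt(\tau)$ (the average, obtained through $R^TR=I_{N-1}$ and the coordinate change) rather than your entrywise minimum, and that it dismisses the seminorm verification, which you wrote out, as a ``straightforward calculation.''
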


Let $f\coloneqq \mathrm{col}(f_1,\ldots,f_N)$. The next lemma provides a rough $\varepsilon$-independent estimate for the evolution of \eqref{eq:tau_scale}.

\begin{lemma}\label{lem:evolution_bound}
    Let $\xt(\tau)$ be a solution of \eqref{eq:tau_scale}. Suppose we have $\|f(\tau,\xt(\tau))\|_\infty \le B_0$ for $\tau \in [\tau_0,\tau_1)$ where $0\le\tau_0<\tau_1$. Then for $\ell=1,\ldots,n$,
    \begin{align*}
        M^{\ell}(\xt_{\tau_1}) &\le B_0(\tau_1-\tau_0)  + M^{\ell}(\xt_{\tau_0})\\
        m^{\ell}(\xt_{\tau_1}) &\ge -B_0(\tau_1-\tau_0)  + m^{\ell}(\xt_{\tau_0}).
    \end{align*}
\end{lemma}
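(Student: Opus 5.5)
The plan is to prove the bound for $M^\ell$ componentwise; the bound for $m^\ell$ then follows by applying the same argument to $-\xt$, or by symmetric reasoning. Fix $\ell$ and write $M(\tau)\coloneqq M^\ell(\xt_\tau)$. The quantity $M(\tau)$ is the maximum of $\xt_i^\ell(\tau+\theta)$ over $i$ and $\theta\in[-\varepsilon d,0]$, that is, the running maximum over the sliding window $[\tau-\varepsilon d,\tau]$. The key observation concerns the scalar derivative of a single agent:
\begin{align*}
(\xt_i^\ell)'(\tau)=f_i^\ell(\tau,\xt_i(\tau))+\frac{1}{\varepsilon}\sum_{j\in\Ncal_i}a_{ij}\big(\xt_j^\ell(\tau-\varepsilon d_{ij})-\xt_i^\ell(\tau)\big).
\end{align*}
Since $\tau-\varepsilon d_{ij}\in[\tau-\varepsilon d,\tau]$, we have $\xt_j^\ell(\tau-\varepsilon d_{ij})\le M(\tau)$. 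Therefore, whenever $\xt_i^\ell(\tau)=M(\tau)$, the coupling term is nonpositive and $(\xt_i^\ell)'(\tau)\le B_0$ for $\tau\in[\tau_0,\tau_1)$. This is the usual max-principle property of diffusive coupling, and it holds regardless of the delays and of $\varepsilon$.

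Next I will turn this into the linear growth bound by a comparison argument. Take any $\delta>0$ and set $\phi(\tau)\coloneqq M(\tau_0)+(B_0+\delta)(\tau-\tau_0)$. I claim that $\xt_i^\ell(\tau)<\phi(\tau)+\delta$ for all $i$ and all $\tau\in[\tau_0,\tau_1)$. At $\tau=\tau_0$ the claim holds because $\xt_i^\ell(\tau_0)\le M(\tau_0)$. Suppose it fails, and let $\tau^*$ be the first time at which some $\xt_i^\ell$ reaches $\phi+\delta$. Before $\tau^*$ every agent lies strictly below $\phi+\delta$. Points of the window earlier than $\tau_0$ lie below $M(\tau_0)\le\phi$. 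Hence $\xt_i^\ell(\tau^*)$ is the window maximum $M(\tau^*)$, because $\phi$ is increasing. The computation above then gives $(\xt_i^\ell)'(\tau^*)\le B_0<B_0+\delta=\phi'$. This contradicts $\xt_i^\ell-\phi-\delta$ crossing zero from below at $\tau^*$, which would require the derivative, or at least the left Dini derivative, to be $\ge \phi'$.

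Some care is needed at the time $\tau^*$. The solution is $C^1$ for $\tau\ge 0$ because the right-hand side is continuous. If $\tau^*<\tau_1$, the estimate $(\xt_i^\ell)'\le B_0$ holds at $\tau^*$; I would use the left derivative so that only values on $[\tau_0,\tau^*]$ are involved. With the claim established on $[\tau_0,\tau_1)$, continuity extends it to $\tau_1$. Every point of the window $[\tau_1-\varepsilon d,\tau_1]$ either lies in $[\tau_0,\tau_1]$, where the claim bounds the value by $\phi+\delta\le\phi(\tau_1)+\delta$, or precedes $\tau_0$, where the value is at most $M(\tau_0)$. This gives $M(\tau_1)\le M(\tau_0)+(B_0+\delta)(\tau_1-\tau_0)+\delta$, and letting $\delta\to0$ yields the statement.

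The main obstacle I expect is the first-crossing step. It must be justified that at the crossing time the crossing agent actually attains the window maximum, including window points before $\tau_0$ that belong to the initial history. This is exactly why $\phi$ starts at $M(\tau_0)$ rather than at $\max_i\xt_i^\ell(\tau_0)$. The strict slack $\delta$ is needed to turn the non-strict derivative bound into a contradiction.
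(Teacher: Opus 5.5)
Your proof is correct, and it rests on the same key observation as the paper's proof: when an agent's current value attains the sliding-window maximum, the delayed diffusive coupling term is nonpositive, so that agent's $\ell$-th component grows at rate at most $B_0$.

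The two arguments differ in how this pointwise bound becomes the linear growth estimate. The paper works with the functional $M^\ell(\xt_\tau)$ itself:
\begin{itemize}
\item if no agent attains the maximum at the current time $\tau$, it concludes $D^+M^\ell(\xt_\tau)\le 0$;
\item otherwise it picks, among the maximizing agents, one with the largest derivative, identifies $D^+M^\ell(\xt_\tau)$ with that derivative when it is positive, and bounds it by $B_0$;
\item it then closes with the comparison principle.
\end{itemize}
You never differentiate the window maximum. Instead, you compare each $C^1$ trajectory against the strictly increasing barrier $M^\ell(\xt_{\tau_0})+(B_0+\delta)(\tau-\tau_0)+\delta$, derive a contradiction at the first crossing time, and let $\delta\to 0$.

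Your route is more elementary. It avoids the case analysis needed to compute the Dini derivative of a nonsmooth sliding-window maximum, and it needs no comparison lemma for Dini derivatives. The price is the $\delta$-slack and an explicit check that the crossing agent realizes the full window maximum, including history values before $\tau_0$. You handle that check correctly by starting the barrier at $M^\ell(\xt_{\tau_0})$. The paper's route, in exchange, yields the differential inequality $D^+M^\ell(\xt_\tau)\le B_0$ for the functional directly, which matches the Dini-derivative/comparison style used for $V$ in Lemma~\ref{lem:practical_consensus}.

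One small imprecision: the solution need not be left-differentiable at $\tau=0$, since the initial history may be nonsmooth. This is harmless here because your crossing time satisfies $\tau^*>\tau_0\ge 0$.
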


Next, we provide a Lyapunov-Krasovskii functional for estimating synchronization error.
The construction is similar to that in \cite[Lemma 33.1]{krasovskii} and relies on the fact that consensus of integrators is robust to communication delays~\cite{munz2011tac}.

\begin{lemma}\label{lem:output_converse_Lyapunov_Krasovskii}
    Suppose Assumption~\ref{ass:graph} holds. Consider a network of $N$ integrators described by
    \begin{align}\label{eq:integrator_tau}
    \begin{split}
        \vt'_i(\tau) = \frac{1}{\varepsilon}\sum_{j\in \Ncal_i} a_{ij}(\vt_j(\tau-\varepsilon d_{ij})-\vt_i(\tau)) \quad \in \R^n,\\
        \vt_{i}[-\varepsilon d,0] \in C([-\varepsilon d,0],\R^n), \qquad i=1,\ldots,N.
    \end{split}
    \end{align}
    Then there exist a functional $V:C([-\varepsilon d,0],\R^{Nn}) \to [0,\infty)$ and constants $c_2,c_4>0$ such that for every $\yt,\tilde{\yt} \in C([-\varepsilon d,0],\R^{Nn})$ and for any solution $\vt(\tau)$ of \eqref{eq:integrator_tau},
    \begin{align*}
        &E(\yt) \le V(\yt) \le c_2E(\yt)\\
        &D^+V(\vt_\tau) \le -\frac{1}{2\varepsilon} E(\vt_\tau)\\
        &|V(\yt)-V(\tilde{\yt})| \le c_4E(\yt-\tilde{\yt})
    \end{align*}
    for $\tau \ge 0$. Moreover, constants $c_2,c_4$ do not depend on $\varepsilon$ and the particular values of $d_{ij} \in [0,d]$.
\end{lemma}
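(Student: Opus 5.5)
The plan is to build $V$ from a Razumikhin/Krasovskii-type construction on the max–min spread of the delayed integrator network, as in \cite[Lemma 33.1]{krasovskii}. First, the scaling. Substituting $\sigma=\tau/\varepsilon$ turns \eqref{eq:integrator_tau} into the $\varepsilon$-free system
\[
\dot v_i(\sigma)=\sum_{j\in\Ncal_i}a_{ij}\bigl(v_j(\sigma-d_{ij})-v_i(\sigma)\bigr),
\]
with history length $d$. Both $E$ and $V$ are defined on segments of length $\varepsilon d$, so a functional $\tilde V$ built for the $\sigma$-scale system transfers through $V(\yt):=\tilde V(\yt(\varepsilon\,\cdot))$. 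Under this transfer the decay rate gets multiplied by $1/\varepsilon$, while the constants $c_2$ and $c_4$ are unchanged. It therefore suffices to take $\varepsilon=1$. Since $E$ is a componentwise maximum, I would also work coordinate by coordinate.

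Second, the monotonicity and contraction facts on the $\sigma$-scale. Each $v_i^\ell$ obeys $\dot v_i^\ell=\sum_j a_{ij}(v_j^\ell(\sigma-d_{ij})-v_i^\ell)$, a convex-combination-type update. This gives two facts.
\begin{itemize}
\item $M^\ell(v_\sigma)$ is nonincreasing and $m^\ell(v_\sigma)$ is nondecreasing. Indeed, if $v_i^\ell$ reaches the current history maximum, its derivative is $\le 0$.
\item By strong connectivity, there is a horizon $T$ (of order $N(d+1)$) and a $\rho\in(0,1)$, both depending only on $[a_{ij}]$, $d$ and $N$ and not on the particular $d_{ij}\in[0,d]$, such that $E(v_{\sigma+T})\le\rho E(v_\sigma)$.
\end{itemize}
The standard proof of the second fact shows that an agent whose value is below the midpoint stays quantitatively below $M$. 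Its influence then propagates along the spanning paths, each hop taking at most time $d+1$, with explicit lower bounds of the form $e^{-\deg_i\cdot(d+1)}$. Uniformity over delays comes from using the worst-case history window $[-d,0]$ in all estimates.

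Third, the functional. I would define
\[
\tilde V(\yt)=\sup_{\sigma\ge0}\, e^{\alpha\sigma}E\bigl(v_\sigma(\yt)\bigr),
\]
where $v(\yt)$ is the solution with initial history $\yt$ and $\alpha=\tfrac12$. Uniqueness holds because the system is linear. The three required properties then follow.
\begin{itemize}
\item \emph{Sandwich.} Taking $\sigma=0$ gives $\tilde V\ge E$. The contraction fact gives $E(v_\sigma)\le \rho^{-1}e^{-\mu\sigma}E(\yt)$ with $\mu=-\ln\rho/T$. I would pick $\alpha<\mu$; if $\mu\le 1/2$, I would rescale $\tilde V$ by a constant, or replace it by $\int_0^\infty E(v_{\sigma+s})\,ds$ plus $E$. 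This yields $\tilde V\le c_2E$ with $c_2=\rho^{-1}$.
\item \emph{Lipschitz estimate.} By linearity, $v(\yt)-v(\tilde\yt)=v(\yt-\tilde\yt)$, and $E$ is a seminorm by Lemma~\ref{lem:property_of_E}. Hence $|\tilde V(\yt)-\tilde V(\tilde\yt)|\le \tilde V(\yt-\tilde\yt)\le c_2E(\yt-\tilde\yt)$, so $c_4=c_2$.
\item \emph{Decay.} Along solutions, the semigroup property gives $\tilde V(v_{\sigma+h})\le e^{-\alpha h}\tilde V(v_\sigma)$. Therefore $D^+\tilde V\le-\alpha\tilde V\le -\alpha E$.
\end{itemize}
To land exactly on $-\tfrac12E$, I would choose $\alpha=\tfrac12$ and check that $\mu>\tfrac12$. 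If that fails, I would instead rescale: $V:=\kappa\tilde V$ with $\kappa=1/(2\alpha)$, so that $D^+V\le -E/2$ and $V\ge \kappa E$. If $\kappa<1$, I would take $V:=\max(1,\kappa)\,\tilde V$. Either way only the constants change.

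The main obstacle is the uniform contraction estimate $E(v_{\sigma+T})\le\rho E(v_\sigma)$ with $\rho,T$ independent of the particular delays. The rest is bookkeeping. I would follow the delayed-consensus argument of \cite{munz2011tac}, tracking explicit exponential lower bounds along a spanning tree so that the bounds depend only on $d$, $\max a_{ij}$, $\min_{a_{ij}>0}a_{ij}$ and $N$. Balancedness is not needed for this part; strong connectivity suffices.
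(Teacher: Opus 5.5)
Your proposal is correct. It shares the paper's key ingredient but builds a different functional.

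\textbf{Shared ingredient.} Both arguments rest on an exponential bound $E(\vt_\tau)\le Me^{-\lambda\tau/\varepsilon}E(\vt_0)$ with $M,\lambda$ uniform over $d_{ij}\in[0,d]$. The paper imports this from the proof of \cite[Thm.~1]{munz2011tac} via time rescaling. You sketch it directly from the monotonicity of $M^\ell,m^\ell$ and propagation along a spanning tree, which also makes explicit that balancedness is not needed for this lemma. Your one-time $\varepsilon$-rescaling at the start is equivalent to the paper carrying $\varepsilon$ through the construction.

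\textbf{The paper's functional.} The paper uses Krasovskii's finite-horizon construction
$V(\yt)=\frac{1}{\varepsilon}\int_0^{\varepsilon T_0}E(\vt_s[\yt])\,ds+\sup_{0\le s\le\varepsilon T_0}E(\vt_s[\yt])$,
with $T_0=\lambda^{-1}\ln 2M$ chosen so that $E$ halves over the horizon. The decay $-\frac{1}{2\varepsilon}E$ then comes out of the integral term exactly, with no rescaling. The sup term is shown to be nonincreasing using only the halving property.

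\textbf{Your functional.} Your infinite-horizon weighted supremum $\sup_{\sigma\ge0}e^{\alpha\sigma}E(v_\sigma)$ gets its decay from the semigroup property alone, in the stronger form $D^+\tilde V\le-\alpha\tilde V$. It needs no horizon tuning, and it gives $c_2=c_4$ immediately because it is a supremum of seminorms composed with a linear solution map. The price is the extra rescaling by $\max\{1,1/(2\alpha)\}$ to reach the constant $\frac12$, which you handle correctly.
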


Using Lemma~\ref{lem:output_converse_Lyapunov_Krasovskii} and techniques from \cite{burton,yeganefar2008tac}, we obtain the following estimate for the synchronization error.

\begin{lemma}\label{lem:practical_consensus}
    Let $V$ be as in Lemma~\ref{lem:output_converse_Lyapunov_Krasovskii} and let $\xt(\tau)$ be a solution of \eqref{eq:tau_scale}. Then for $\tau \ge 0$,
    \begin{align*}
        D^+V(\xt_\tau) & \le -\frac{1}{2\varepsilon c_2}V(\xt_\tau) + 2c_4 \|f(\tau,\xt(\tau))\|_\infty.
    \end{align*}
    In particular, if $\|f(\tau,\xt(\tau))\|_\infty \le B_0$ for $\tau \in [\tau_0,\tau_1)$ where $0\le\tau_0<\tau_1$, then for $\tau \in [\tau_0,\tau_1)$,
    \begin{align*}
        E(\xt_\tau) \le \exp\left(-\frac{\tau-\tau_0}{2\varepsilon c_2} \right) V(\xt_{\tau_0}) + 4\varepsilon c_2 c_4 B_0.
    \end{align*}
\end{lemma}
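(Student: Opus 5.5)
The plan is to compare the perturbed network \eqref{eq:tau_scale} with the unperturbed integrator network \eqref{eq:integrator_tau} over a short window. The decay property of $V$ along \eqref{eq:integrator_tau} and the $E$-Lipschitz property of $V$ from Lemma~\ref{lem:output_converse_Lyapunov_Krasovskii} then transfer to \eqref{eq:tau_scale}, up to a term proportional to the size of the node vector fields.

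\textbf{Comparison solution.} Fix $\tau \ge 0$ and $h>0$ small. Let $\vt$ be the solution of \eqref{eq:integrator_tau} on $[\tau,\tau+h]$ with initial segment $\vt_\tau=\xt_\tau$, and set $\mathsf{e}\coloneqq \xt-\vt$. Then $\mathsf{e}_\tau=0$ and
\begin{align*}
\mathsf{e}_i'(s)=f_i(s,\xt_i(s))+\frac{1}{\varepsilon}\sum_{j\in\Ncal_i}a_{ij}\big(\mathsf{e}_j(s-\varepsilon d_{ij})-\mathsf{e}_i(s)\big).
\end{align*}
This is exactly the structure treated in Lemma~\ref{lem:evolution_bound}, with the exogenous forcing $f_i(s,\xt_i(s))$ in place of $f_i(s,\mathsf{e}_i(s))$. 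The proof of that lemma uses only a bound on the forcing term and the convex-combination structure of the coupling, so it applies verbatim; I would point this out explicitly.

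\textbf{Bounding the deviation.} Let $B_h\coloneqq\sup_{s\in[\tau,\tau+h]}\|f(s,\xt(s))\|_\infty$. Since $M^\ell(\mathsf{e}_\tau)=m^\ell(\mathsf{e}_\tau)=0$, Lemma~\ref{lem:evolution_bound} gives $M^\ell(\mathsf{e}_{\tau+h})\le B_h h$ and $m^\ell(\mathsf{e}_{\tau+h})\ge -B_h h$. Hence $E(\mathsf{e}_{\tau+h})\le 2B_h h$. By the third property in Lemma~\ref{lem:output_converse_Lyapunov_Krasovskii},
\begin{align*}
V(\xt_{\tau+h})-V(\xt_\tau)\le V(\vt_{\tau+h})-V(\vt_\tau)+2c_4B_hh .
\end{align*}

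\textbf{Differential inequality.} Divide by $h$ and take $\limsup_{h\to0^+}$. By continuity of $s\mapsto f(s,\xt(s))$ we have $B_h\to\|f(\tau,\xt(\tau))\|_\infty$. The second property in Lemma~\ref{lem:output_converse_Lyapunov_Krasovskii}, at the point where $\vt_\tau=\xt_\tau$, bounds the first term by $-\frac{1}{2\varepsilon}E(\xt_\tau)$. Using $E\ge V/c_2$ then yields the claimed bound on $D^+V(\xt_\tau)$.

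\textbf{Integrated estimate.} On $[\tau_0,\tau_1)$ the inequality becomes $D^+V\le -\frac{1}{2\varepsilon c_2}V+2c_4B_0$, where $\tau\mapsto V(\xt_\tau)$ is continuous because $V$ is $E$-Lipschitz and $\xt$ is continuous. A standard comparison lemma for Dini derivatives then gives
\begin{align*}
V(\xt_\tau)\le e^{-(\tau-\tau_0)/(2\varepsilon c_2)}V(\xt_{\tau_0})+4\varepsilon c_2c_4B_0 .
\end{align*}
Finally $E\le V$ gives the statement.

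\textbf{Main obstacle.} The delicate step is justifying Lemma~\ref{lem:evolution_bound} for the error system. We need the max/min argument over the history window to hold with exogenous forcing, and the zero initial segment must give a bound linear in $h$ with no $O(1)$ term. I would re-run that max/min argument directly if needed.
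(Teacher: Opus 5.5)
Your proof is correct, but you handle the perturbation step differently from the paper.

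The paper never compares true trajectories. It uses the Burton-type Lemma~\ref{lem:Dini_trick} to replace $\xt_{\tau+h}$ by the linearly extended segment $\xt_\tau[h;G^*]$ and splits $G^*=G+f(\tau,\xt(\tau))$. It reads the $G$-part as the Dini derivative along the unforced integrator solution issued from $\xt_\tau$, so the decay property of Lemma~\ref{lem:output_converse_Lyapunov_Krasovskii} applies. It bounds the rest by the $E$-Lipschitz property, since $\xt_\tau[h;G^*]$ and $\xt_\tau[h;G]$ differ only on $(-h,0]$, by $(\theta+h)f(\tau,\xt(\tau))$.

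You instead make that comparison solution $\vt$ explicit and bound the true deviation $\mathsf{e}=\xt-\vt$ on $[\tau,\tau+h]$ by rerunning the max/min argument of Lemma~\ref{lem:evolution_bound} on the forced error network. This is legitimate: that argument uses only the bound on the forcing and the fact that, at a maximizing index, the delayed coupling term is nonpositive. The zero initial segment then gives exactly $E(\mathsf{e}_{\tau+h})\le 2B_hh$.

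What your route buys: it dispenses with Lemma~\ref{lem:Dini_trick} altogether and reuses a lemma already proved in the paper. It also yields the finite-increment inequality $V(\xt_{\tau+h})\le V(\vt_{\tau+h})+2c_4B_hh$ along the way.

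What it costs: you need a trajectory-level estimate for the error system. Here the monotone structure of the coupling supplies it, and a Gronwall bound giving $2B_hh\,e^{O(h/\varepsilon)}$ would also suffice. The paper's argument is purely local and structure-free: it works for any functional that is Lipschitz with respect to the sup norm, along any trajectory with a right-hand derivative. That is why Lemma~\ref{lem:Dini_trick} is stated in that generality.
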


\subsection{Key Steps and Estimates}

We may consider a single collection $\{d_{ij}\}_{i,j=1}^N$ for simplicity; the following estimates hold uniformly for $d_{ij} \in [0,d]$.
Choose a compact rectangle $K_1 \subseteq \R^n$ and a compact, convex set $K_2 \subseteq \R^n$ such that $K \cup S \subsetneq \operatorname{int}K_1$ and $ K_1 \subsetneq  \operatorname{int} K_2$, where $S$ is defined in Lemma~\ref{lem:s_dynamics}. By Assumption~\ref{ass:vector_field}, $B_0 \coloneqq \sup\{\|f_i(\tau,\xt_i)\|_\infty:\xt_i \in K_2, i=1,\ldots,N,\tau \ge 0\}$ is finite. Then, using Lemma~\ref{lem:evolution_bound}, we can choose $T>0$ such that $\xt_i(\tau) \in K_1,i=1,\ldots,N$ for $\tau \in [0,T]$ regardless of the value of $\varepsilon$. Set $T_1 \coloneqq T/3$ and let $\overline{T}_{\varepsilon}>0$ be the first time $\tau$ at which any $\xt_i(\tau),i=1,\ldots,N$ leaves $K_2$. By construction, we have $T_1<T<\overline{T}_{\varepsilon}$ for any $\varepsilon>0$; we later claim that $\overline{T}_{\varepsilon} = \infty$ for sufficiently small $\varepsilon>0$.

Let $V$ be as in Lemma~\ref{lem:output_converse_Lyapunov_Krasovskii}, and let $V_0$ be an upper bound for $V(\xt[-\varepsilon d,0])$ over all initial histories satisfying $\xt_i[-\varepsilon d,0] \in C([-\varepsilon d,0],K)$, $i=1,\ldots,N$. Using Lemma~\ref{lem:practical_consensus}, we obtain
\begin{align*}
    E(\xt_\tau) \le \exp\left(-\frac{\tau}{2\varepsilon c_2} \right)V_0 + 4\varepsilon c_2 c_4 B_0, \qquad  \forall\tau \in [0,\overline{T}_{\varepsilon}).
\end{align*}
This shows that $E(\xt_\tau)=O(\varepsilon)$, and hence $\|\wt(\tau)\| = O(\varepsilon)$ by Lemma~\ref{lem:property_of_E}, as $\varepsilon \to 0+$ uniformly for $\tau \in [T_1,\overline{T}_\varepsilon)$.

We now compare $\zt$ and $\st$. By Assumption~\ref{ass:contractivity}, it suffices to consider a solution $\st(\tau)$ with $\st(0) \in S$. Define $F(\tau,\zt,\wt)\coloneq \frac{1}{N}\sum_{i=1}^N f_i(\tau,\zt+(R_i\otimes I_n)\wt)$. Comparing \eqref{eq:blended_dynamics} with \eqref{eq:z_dynamics} and adding and subtracting appropriate terms, we obtain
\begin{align}\label{eq:final_z_s}
\begin{split}
    &\zt'-\st'=\left[ F(\tau,\zt,0) -F(\tau,\st,0) \right]+[F(\tau,\zt,\wt)-F(\tau,\zt,0)]\\
    &\quad+\frac{1}{\varepsilon N}\sum_{i,j=1}^N a_{ij}\left[\zt(\tau-\varepsilon d_{ij})-\st(\tau-\varepsilon d_{ij})-(\zt-\st)\right]\\
    &\quad+\frac{1}{N}\sum_{\substack{i,j=1 \\ d_{ij} \neq 0}}^N a_{ij}d_{ij}\left[\frac{1}{1+A}F(\tau,\st,0)- \frac{\st(\tau-\varepsilon d_{ij})-\st}{-\varepsilon d_{ij}} \right]\\
    &\quad +\frac{1}{N} \sum_{i,j=1}^N a_{ij}(R_j \otimes I_n)\frac{\wt(\tau-\varepsilon d_{ij})-\wt}{\varepsilon},
\end{split}
\end{align}
where we omit the argument $\tau$, e.g., $\zt$ instead of $\zt(\tau)$, when the meaning is clear from the context.

Let $W_1(\tau)\coloneq \|\zt(\tau)-\st(\tau)\|^2_H$ and let $l>0$ be an upper bound for $\left\| \frac{\partial F}{\partial \wt}(\tau,\zt,\wt) \right\|$ over all $\tau \ge 0$, $\zt \in K_2$, and $\|\wt\| \le 1$. Note that Assumption~\ref{ass:contractivity} yields
\begin{align*}
    (\zt-\st)^TH\left[ F(\tau,\zt,0) -F(\tau,\st,0) \right] \le -\frac{\gamma}{2}\|\zt-\st\|^2_H.
\end{align*}
Moreover, for small $\varepsilon$, we have $\|\wt(\tau)\|=O(\varepsilon) \le 1$ for $\tau \in  [T_1,\overline{T}_{\varepsilon})$.
Using the observations above and the Cauchy-Schwarz inequality repeatedly, one can deduce from \eqref{eq:final_z_s} that for sufficiently small $\varepsilon>0$,
\begin{align}\label{eq:final_DW}
\begin{split}
    &D^+\sqrt{W_1}  \le -\frac{\gamma}{2}\sqrt{W_1} + l\sqrt{\|H\|}\,\|\wt\|\\
    &+\frac{1}{\varepsilon N}\sum_{i,j=1}^N a_{ij}\left( \sqrt{W_1(\tau-\varepsilon d_{ij})}-\sqrt{W_1} \right)\\
    &+\frac{1}{N}\sum_{\substack{i,j=1 \\ d_{ij} \neq 0}}^N a_{ij}d_{ij} \left\|\frac{1}{1+A}F(\tau,\st,0)- \frac{\st(\tau-\varepsilon d_{ij})-\st}{-\varepsilon d_{ij}} \right\|_H \\
    &+\frac{\sqrt{\|H\|}}{N}\sum_{i,j=1}^N a_{ij}\left\| \frac{\wt(\tau- \varepsilon d_{ij})-\wt}{\varepsilon}\right\|, \qquad  \forall \tau \in [T,\overline{T}_{\varepsilon}).
\end{split}
\end{align}
From now on, we focus on the last three terms in \eqref{eq:final_DW}.

\textbf{Step 1.} The last term in \eqref{eq:final_DW} may seem difficult to control because of the division by $\varepsilon$. To handle this term, we observe, for each $h \in [0,d]$, the dynamics of the difference quotient
\begin{align*}
    \qt^{h}_i(\tau) \coloneqq \frac{\xt_i(\tau-\varepsilon h)-\xt_i(\tau)}{\varepsilon}.
\end{align*}
In particular, we show that $\{\qt^h_i\}_{i=1}^N$ also achieves practical synchronization.
The key observation is that, by the linearity of the diffusive coupling, $\{\qt_i^{h}\}_{i=1}^N$ inherits the same coupling structure. For $\tau \ge T_1$ and sufficiently small $\varepsilon>0$, we use \eqref{eq:tau_scale} to obtain
\begin{align}\label{eq:final_quotient_dynamics}
\begin{aligned}
    (\qt_i^{h})'(\tau)
    &=\frac{1}{\varepsilon}[f_i(\tau-\varepsilon h,\xt_i(\tau-\varepsilon h))-f_i(\tau,\xt_i(\tau))]\\
    &\quad +\frac{1}{\varepsilon}\sum_{j \in \Ncal_i}a_{ij}(\qt_j^{h}(\tau-\varepsilon d_{ij})-\qt_i^{h}(\tau)).
\end{aligned}
\end{align}

Set $\qt^{h} \coloneqq \mathrm{col}(\qt_1^{h},\ldots,\qt_N^{h})$. Using the boundedness of each $\qt_i^h(\tau)$ for $\tau \in  [T_1,\overline{T}_{\varepsilon})$ (this follows from $E(\xt_{\tau})=O(\varepsilon)$ but details are omitted) and the Lipschitz continuity of each~$f_i$ (Assumptions~\ref{ass:vector_field} and \ref{ass:Lipschitz_continuity}), one can show that there exists $B_1>0$ such that for $\tau \in [T_1,\overline{T}_{\varepsilon})$ and sufficiently small $\varepsilon>0$,
\begin{align*}
    \frac{1}{\varepsilon}\|f_i(\tau-\varepsilon h,\xt_i(\tau-\varepsilon h))-f_i(\tau,\xt_i(\tau))\|_{\infty} \le B_1,
\end{align*}
for any $h \in [0,d]$. Then, by applying Lemma~\ref{lem:practical_consensus} to \eqref{eq:final_quotient_dynamics} and using an argument similar to that used for showing $E(\xt_{\tau})=O(\varepsilon)$, one can obtain $E(\qt_\tau^h) = O(\varepsilon)$ as $\varepsilon \to 0+$ uniformly for $\tau \in  [T,\overline{T}_\varepsilon)$ and $h \in [0,d]$. By Lemma~\ref{lem:property_of_E}, it follows that
\begin{align*}
    \left\| \frac{\wt(\tau- \varepsilon d_{ij})-\wt}{\varepsilon}\right\| \le \sqrt{Nn}E(\qt_\tau^{d_{ij}}) = O(\varepsilon)
\end{align*}
as $\varepsilon \to 0+$ uniformly for $\tau \in  [T,\overline{T}_{\varepsilon})$ and $i,j=1,\ldots,N$.

\textbf{Step 2.} We now focus on the second-to-last term in \eqref{eq:final_DW}. When $d_{ij} \neq 0$ and $\tau -\varepsilon d > 0$, from \eqref{eq:blended_dynamics}, we have
\begin{align*}
    \frac{\st(\tau-\varepsilon d_{ij})-\st}{-\varepsilon d_{ij}}= \frac{1}{\varepsilon d_{ij}}\int_{\tau-\varepsilon d_{ij}}^{\tau} \frac{1}{1+A}F(\theta,\st(\theta),0)d\theta.
\end{align*}
By the Lipschitz continuity of each $f_i$ and the trajectory $\st(\tau)$ (Assumptions~\ref{ass:vector_field}--\ref{ass:Lipschitz_continuity} and Lemma~\ref{lem:s_dynamics}), we obtain that for each $(i,j)$ pair satisfying $d_{ij} \neq 0$,
\begin{align*}
\begin{split}
    &\bigg\|\frac{1}{1+A}F(\tau,\st(\tau),0)- \frac{\st(\tau-\varepsilon d_{ij})-\st}{-\varepsilon d_{ij}} \bigg\|_H\\
    &= \frac{1}{\varepsilon d_{ij}} \bigg\| \frac{1}{1+A} \int_{\tau -\varepsilon d_{ij}}^{\tau} (F(\tau,\st(\tau),0)-F(\theta,\st(\theta),0)) d\theta\bigg\|_H\\
    &\le \frac{1}{1+A}\sup_{\theta \in [\tau-\varepsilon d,\tau]} \! \left\|F(\tau,\st(\tau),0)-F(\theta,\st(\theta),0)\right\|_H \!=O(\varepsilon)
\end{split}
\end{align*}
as $\varepsilon \to 0+$ uniformly for $\tau \in  [T,\overline{T}_{\varepsilon})$.

\textbf{Step 3.} Finally, to cancel the remaining delayed term in \eqref{eq:final_DW}, we define
\begin{align*}
    W_2(\tau) \coloneqq \frac{1}{\varepsilon N} \sum_{i,j=1}^N a_{ij} \int_{\tau-\varepsilon d_{ij}}^\tau e^{-c(\tau-\varepsilon d_{ij}-\theta)}\sqrt{W_1(\theta)}\ d\theta,
\end{align*}
where $c>0$ will be chosen later. Let $W\coloneq\sqrt{W_1} +W_2$ and $\bar{a} \coloneqq \frac{1}{N}\sum_{i,j=1}^N a_{ij}$. Combining \eqref{eq:final_DW} with the results of the previous steps, we obtain for sufficiently small $\varepsilon>0$ that
\begin{align*}
\begin{split}
    D^+W &= D^+\sqrt{W_1} -cW_2\\
    &+ \frac{1}{\varepsilon N}\sum_{i,j=1}^N a_{ij}\bigg(e^{\varepsilon c d_{ij}}\sqrt{W_1}-\sqrt{W_1(\tau-\varepsilon d_{ij})}\bigg)\\
    &\le \bigg(-\frac{\gamma}{2} + \bar{a}\frac{e^{\varepsilon c d}-1}{\varepsilon} \bigg)\sqrt{W_1} - cW_2 + O(\varepsilon)
\end{split}
\end{align*}
uniformly for $\tau \in  [T,\overline{T}_\varepsilon)$. Since $\lim_{\varepsilon \to 0+} \frac{e^{\varepsilon c d}-1}{\varepsilon} = cd$, we can choose an appropriate value of $c$ so that the coefficient of $\sqrt{W_1}$ in the above inequality becomes smaller than $-c$ as $\varepsilon \to 0+$. For sufficiently small $\varepsilon>0$, this leads to
\begin{align}\label{eq:final_DW_simplified}
\begin{split}
    D^+W(\tau) 
    & \le -cW(\tau) + O(\varepsilon), \qquad \forall \tau \in [T,\overline{T}_{\varepsilon}).
\end{split}
\end{align}

By carefully choosing $K_2$ at the beginning of the proof and using \eqref{eq:final_DW_simplified}, one can show that $\overline{T}_{\varepsilon}=\infty$ for all sufficiently small $\varepsilon>0$. See Appendix~\ref{subsection:approximation_by_blended_dynamics} for the details of this argument. With this fact, the comparison principle yields
\begin{align*}
\begin{split}
    &\limsup_{\tau \to \infty} \|\xt_i(\tau)-\st(\tau)\|\\
    &\le \limsup_{\tau \to \infty} \|\zt(\tau)-\st(\tau)\| + \limsup_{\tau \to \infty}\|(R_i\otimes I_n)\wt(\tau)\| \\
    &\le \limsup_{\tau \to \infty}\frac{W(\tau)}{\sqrt{\lambda_{\min}(H)}} + \limsup_{\tau \to \infty}\|\wt(\tau)\|=O(\varepsilon)
\end{split}
\end{align*}
for sufficiently small $\varepsilon>0$. This completes the proof.

\section{Conclusion}

This paper shows that emergent behavior of heterogeneous agents persists in the presence of delays when each agent's node dynamics is sufficiently slow.
Moreover, delays additionally slow down the resulting collective behavior by scaling down the vector field of the blended dynamics.

Future work includes extending the analysis to more general settings, such as time-varying delays and different stability assumptions on the blended dynamics.
Relatedly, it has been observed that delays decrease the collective frequency in oscillator networks \cite{niebur1991PhysRevLett,torres2026PhysRevE}.
It would also be interesting to analyze this phenomenon as a continuation of our work, particularly when the blended dynamics has a stable limit cycle as in \cite{lee2018cdc}.

\bibliographystyle{IEEEtran}
\bibliography{Ref}

@book{burton,
  author = {T.~A.~Burton},
  year = {1985},
  title = {Stability and Periodic Solutions of Ordinary and Functional Differential Equations},
  publisher = {Academic Press},
  address = {Orlando, FL, USA}
}

@book{krasovskii,
  author = {N.~N.~Krasovskii},
  year = {1963},
  title = {Stability of Motion},
  publisher = {Stanford University Press},
  address = {trans. J.~L.~Brenner. Stanford, CA, USA},
  translator = {J.~L.~Brenner}
}

@article{yeganefar2008tac,
  author = {N.~Yeganefar and P.~Pepe and M.~Dambrine},
  title = {Input-to-state stability of time-delay systems: {A} link with exponential stability},
    year = {2008},
    volume = {53},
    number = {6},
    journal = {IEEE Trans. Autom. Control},
    pages = {1526--1531},
}

@article{munz2011tac,
  author = {U.~M\"{u}nz and A.~Papachristodoulou and F.~Allg\"{o}wer},
  title = {Consensus in multi-agent systems with coupling delays and switching topology},
    year = {2011},
    volume = {56},
    number = {12},
    journal = {IEEE Trans. Autom. Control},
    pages = {2976--2982},
}

@article{olfati-saber2004tac,
  author={R.~Olfati-Saber and R.~M.~Murray},
  journal={IEEE Trans. Autom. Control}, 
  title={Consensus problems in networks of agents with switching topology and time-delays}, 
  year={2004},
  volume={49},
  number={9},
  pages={1520--1533},
}

@article{panteley2017tac,
  author={E.~Panteley and A.~Lor\'ia},
  journal={IEEE Trans. Autom. Control}, 
  title={Synchronization and Dynamic Consensus of Heterogeneous Networked Systems}, 
  year={2017},
  volume={62},
  number={8},
  pages={3758--3773},
}

@article{lee2020aut,
  title = {A tool for analysis and synthesis of heterogeneous multi-agent systems under rank-deficient coupling},
  author = {J.~G.~Lee and H.~Shim},
  journal = {Automatica},
  volume = {117},
  note = {{A}rt.~no.~108952},
  year = {2020}
}

@article{munz2010aut,
  title = {Delay robustness in consensus problems},
  author = {U.~M\"unz and A.~Papachristodoulou and F.~Allg\"ower},
  journal = {Automatica},
  volume = {46},
  pages = {1252--1265},
  year = {2010}
}

@inproceedings{scardovi2008cdc,
    author = {L.~Scardovi and R.~Sepulchre},
    title = {Synchronization in networks of identical linear systems},
    booktitle = {Proc. IEEE Conf. Decision Control},
    year = {2008},
    pages = {546--551}
}

@inproceedings{moreau2004cdc,
    author = {L.~Moreau},
    title = {Stability of continuous-time distributed consensus algorithms},
    booktitle = {Proc. IEEE Conf. Decision Control},
    year = {2004},
    pages = {3998--4003}
}

@article{dimarogonas2012tac,
    author = {D.~V.~Dimarogonas and E.~Frazzoli and K.~H.~Johansson},
    title = {Distributed event-triggered control for multi-agent systems},
    journal = {IEEE Trans. Autom. Control},
    year = {2012},
    volume = {57},
    number = {5},
    pages = {1291--1297}
}

@article{yu2011tac,
    author = {W.~Yu and G.~Chen and M.~Cao},
    title = {Consensus in directed networks of agents with nonlinear dynamics},
    journal = {IEEE Trans. Autom. Control},
    year = {2011},
    volume = {56},
    number = {6},
    pages = {1436--1441}
}

@article{wieland2011aut,
  title = {An internal model principle is necessary and sufficient for linear output synchronization},
  author = {P.~Wieland and R.~Sepulchre and F.~Allg\"ower},
  journal = {Automatica},
  volume = {47},
  pages = {1068--1074},
  year = {2011}
}

@article{lohmiller1998aut,
    author = {W.~Lohmiller and J.-J.~E.~Slotine},
    title = {On contraction analysis for non-linear systems},
    journal = {Automatica},
    year = {1998},
    volume = {34},
    number = {6},
    pages = {683--696}
}

@article{lee2022aut,
    title = {Blended dynamics approach to distributed optimization: {S}um convexity and convergence rate},
    journal = {Automatica},
    volume = {141},
    year = {2022},
    author = {S.~Lee and H.~Shim},
    note = {{A}rt.~no.~110290},
}

@article{kim2016tac,
    title = {Robustness of synchronization of heterogeneous agents by strong coupling and a large number of agents},
    journal = {IEEE Trans. Autom. Control},
    volume = {61},
    year = {2016},
    author = {J.~Kim and J.~Yang and H.~Shim and J.-S.~Kim and J.~H.~Seo},
    number = {10},
    pages = {3096--3102}
}

@incollection{lee2022bookchapter,
author="J.~G.~Lee and H.~Shim",
editor="Jiang, Zhong-Ping
and Prieur, Christophe
and Astolfi, Alessandro",
title="Design of Heterogeneous Multi-agent System for Distributed Computation",
bookTitle="Trends in Nonlinear and Adaptive Control",
year="2022",
publisher="Springer International Publishing",
address="Cham, Switzerland",
pages="83--108",
}

@article{niebur1991PhysRevLett,
  title = {Collective frequencies and metastability in networks of limit-cycle oscillators with time delay},
  author = {E.~Niebur and H.~G.~Schuster and D.~M.~Kammen},
  journal = {Phys. Rev. Lett.},
  volume = {67},
  number = {20},
  pages = {2753--2756},
  year = {1991},
}

@article{torres2026PhysRevE,
  title = {Intrinsic resonance depends on network size for coupled-delayed interacting oscillators},
  author = {F.~A.~Torres and A.~Weinstein and J.~M.~Cortes and W.~El-Deredy},
  journal = {Phys. Rev. E},
  volume = {113},
  year = {2026},
  note = {{A}rt.~no.~024207},
}

@inproceedings{lee2018cdc,
  author={J.~G.~Lee and H.~Shim},
  title={Heterogeneous {V}an der {P}ol Oscillators under Strong Coupling},
  booktitle = {Proc. IEEE Conf. Decision Control},
  year={2018},
  pages={3666-3673},
}

\newpage
\appendices

\section{Proof of Lemmas}\label{section:proof_of_lemmas}
\subsection{Proof of Lemma~\ref{lem:property_of_E}}

A straightforward calculation shows the first claim. The second claim follows from
\begin{align*}
    \|\wt(\tau)\| &=\|(R^T R\otimes I_n)\wt(\tau)\|\\
    &\le  \sqrt{Nn}\|(R\otimes I_n)\wt(\tau)\|_\infty\\
    &= \sqrt{Nn}\|\xt(\tau)-\one_N\otimes \zt(\tau)\|_\infty \le \sqrt{Nn}E(\xt_\tau),
\end{align*}
where we used $\|R^T\otimes I_n\|=1$ and the relationship between the 2-norm and the infinity norm.

\subsection{Proof of Lemma~\ref{lem:evolution_bound}}

Fix $\ell$ and write $\xt_i = \mathrm{col}(\xt_i^1,\ldots,\xt_i^n), i=1,\ldots,N$. For each $\tau \in (\tau_0,\tau_1)$, let $\Jcal(\tau)$ be the set of every pair $(i,\tilde{\tau}) \in \{1,\ldots,N\}\times[\tau-\varepsilon d,\tau]$ such that $M^{\ell}(\xt_{\tau})=\xt_{i}^{\ell}(\tilde{\tau})$. If $(i,\tau) \notin \Jcal(\tau)$ for all $i=1,\ldots,N$, then we obtain $D^+M^{\ell}(\xt_{\tau}) \le 0$. If not, choose $i^* \in \{1,\ldots,N\}$ such that $(i^*,\tau) \in \Jcal(\tau)$ and $(\xt^{\ell}_{i^*})'(\tau) \ge (\xt^{\ell}_{i})'(\tau)$ for all $i$ satisfying $(i,\tau) \in \Jcal(\tau)$. If $(\xt^{\ell}_{i^*})'(\tau) \le 0$, then again we get $D^+M^{\ell}(\xt_{\tau}) \le 0$. Otherwise,
\begin{align*}
    &D^+M^{\ell}(\xt_{\tau}) = (\xt^{\ell}_{i^*})'(\tau) \\ 
    &\le B_0 +\frac{1}{\varepsilon}\sum_{j\in \Ncal_{i^*}} a_{i^*j}(\xt^{\ell}_j(\tau-\varepsilon d_{i^*j})-M^{\ell}(\xt_\tau)) \le B_0
\end{align*}
and the comparison principle proves the first inequality. The second inequality can be proved similarly.

\subsection{Proof of Lemma~\ref{lem:output_converse_Lyapunov_Krasovskii}}

Following the proof of \cite[Thm. 1]{munz2011tac} and using a time-rescaling argument, we can choose $M\ge1$ and $\lambda>0$, as functions of only the adjacency matrix and $d$, such that any solution $\vt(\tau)$ of \eqref{eq:integrator_tau} satisfies
\begin{align*}
    E(\vt_\tau) \le Me^{-\lambda \tau/\varepsilon}E(\vt_0),\qquad \forall \tau \ge 0.
\end{align*}
Now, motivated by the proof of \cite[Lemma 33.1]{krasovskii}, define a functional $V:C([-\varepsilon d,0],\R^{Nn}) \to [0,\infty)$ by
\begin{align}\label{eq:Lyapunov-Krasovskii_construction}
    V(\yt) \coloneqq \frac{1}{\varepsilon}\int_0^{\varepsilon T_0} E(\vt_s[\yt]) ds + \sup_{0 \le s \le \varepsilon T_0} E(\vt_s[\yt])
\end{align}
where $\vt[\yt]$ is the solution trajectory of \eqref{eq:integrator_tau} with $\vt[-\varepsilon d,0]=\yt$ and $T_0\coloneq \frac{1}{\lambda}\ln 2M$. The first property is verified by
\begin{align*}
    E(\yt) &\le \sup_{0 \le s \le \varepsilon T_0} E(\vt_s[\yt]) \le  V(\yt),\\
    V(\yt) &\le \frac{1}{\varepsilon}\int_0^{\varepsilon T_0} Me^{-\lambda s/\varepsilon}E(\yt)ds + ME(\yt) \\
    &= \bigg[ \frac{M}{\lambda}(1-e^{-\lambda T_0}) + M \bigg]E(\yt).
\end{align*}

Since we have $E(\vt_{\tau+\varepsilon T_0}) \le Me^{-\lambda T_0}E(\vt_\tau)=\frac{1}{2}E(\vt_\tau)$ and $\vt_s[\vt_\tau]=\vt_{s+\tau}$ for $s,\tau \ge 0$, the last term in \eqref{eq:Lyapunov-Krasovskii_construction} with $\yt=\vt_\tau$ does not increase with $\tau$. Hence, we can calculate
\begin{align*}
    &D^+V(\vt_\tau)\\
    &\le \limsup_{h \to 0+} \frac{1}{\varepsilon h}\bigg[ \int_{h}^{\varepsilon T_0+h} E(\vt_{s+\tau})ds- \int_0^{\varepsilon T_0} E(\vt_{s+\tau})ds\bigg]\\
    &=\frac{1}{\varepsilon}\left[ E(\vt_{\tau + \varepsilon T_0}) - E(\vt_\tau) \right] \le -\frac{1}{2\varepsilon} E(\vt_\tau),
\end{align*}
where we again used $E(\vt_{\tau+\varepsilon T_0}) \le \frac{1}{2}E(\vt_\tau)$. This proves the second property. The last property follows from
\begin{align*}
    &|V(\yt)-V(\tilde{\yt})|\\
    &\le \frac{1}{\varepsilon}\int_0^{\varepsilon T_0} E(\vt_s[\yt-\tilde{\yt}]) ds + \sup_{0 \le s \le \varepsilon T_0} E(\vt_s[\yt-\tilde{\yt}])\\
    & \le \bigg[ \frac{1}{\varepsilon}\int_0^{\varepsilon T_0} Me^{-\lambda s/\varepsilon} ds+ \sup_{0 \le s \le \varepsilon T_0} Me^{-\lambda s/\varepsilon} \bigg]E(\yt-\tilde{\yt})\\
    &\le \bigg[ \frac{M}{\lambda}(1-e^{-\lambda T_0}) + M \bigg]E(\yt-\tilde{\yt}),
\end{align*}
where we used that $E$ is a seminorm and \eqref{eq:integrator_tau} is linear.

\subsection{Proof of Lemma~\ref{lem:practical_consensus}}

Let $\yt \in C([-\varepsilon d,0],\R^{Nn})$. For $\xi \in \R^{Nn}$ and $0<h<\varepsilon d$, define a segment $\yt[h;\xi] \in C([-\varepsilon d,0],\R^{Nn})$ by
\begin{align*}
    \yt[h;\xi](\theta) \coloneqq
    \begin{cases}
        \yt(\theta+h) &\theta \in [-\varepsilon d,-h),\\
        \yt(0) + (\theta+h)\xi &\theta \in [-h,0].
    \end{cases}
\end{align*}
The following lemma is another version of \cite[Thm. 4.2.3]{burton}, which states that the Dini derivatives of a functional can be computed using first-order approximations.
\begin{lemma}[\protect{\cite[Thm. 4.2.3]{burton}}]\label{lem:Dini_trick}
    Let $V$ be as in Lemma~\ref{lem:output_converse_Lyapunov_Krasovskii}\footnote{As the proof shows, Lemma~\ref{lem:Dini_trick} holds for general locally Lipschitz functionals with respect to the supremum norm.} and let $\xt:\R \to \R^{Nn}$. If $\xt$ has right-hand derivative $\xi$ at $\tau$, i.e., $\lim_{h \to 0+} \frac{1}{h}\|\xt(\tau+h)-\xt(\tau)-h \xi\|_\infty=0$, then
    \begin{align*}
        D^+V(\xt_\tau)=\limsup_{h \to 0+} \frac{V(\xt_\tau[h;\xi])-V(\xt_\tau)}{h}.
    \end{align*}
\end{lemma}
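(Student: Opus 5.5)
The plan is to reduce the Dini derivative along the actual trajectory to the Dini derivative along the piecewise-linear extension $\xt_\tau[h;\xi]$, using that $V$ is globally Lipschitz with respect to the seminorm $E$, and hence with respect to the supremum norm. By Lemma~\ref{lem:output_converse_Lyapunov_Krasovskii} and Lemma~\ref{lem:property_of_E},
\begin{align*}
|V(\yt)-V(\tilde\yt)|\le c_4E(\yt-\tilde\yt)\le 2c_4\sup_{\theta\in[-\varepsilon d,0]}\|\yt(\theta)-\tilde\yt(\theta)\|_\infty .
\end{align*}
So it suffices to show that $\sup_\theta\|\xt_{\tau+h}(\theta)-\xt_\tau[h;\xi](\theta)\|_\infty=o(h)$ as $h\to0+$.

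First, for $\theta\in[-\varepsilon d,-h)$ the two segments coincide exactly, since both equal $\xt(\tau+\theta+h)$.

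Second, for $\theta\in[-h,0]$ set $u=\theta+h\in[0,h]$. The difference is $\xt(\tau+u)-\xt(\tau)-u\xi$. Write $\eta(u)\coloneqq\|\xt(\tau+u)-\xt(\tau)-u\xi\|_\infty$. The right-derivative hypothesis gives $\eta(u)/u\to0$ as $u\to0+$, so
\begin{align*}
\sup_{0<u\le h}\eta(u)\le h\sup_{0<u\le h}\eta(u)/u=o(h),
\end{align*}
and $\eta(0)=0$.

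Combining the two pieces,
\begin{align*}
|V(\xt_{\tau+h})-V(\xt_\tau[h;\xi])|\le 2c_4\,o(h).
\end{align*}
Dividing by $h$ and taking $\limsup_{h\to0+}$ shows that the two difference quotients $\frac{V(\xt_{\tau+h})-V(\xt_\tau)}{h}$ and $\frac{V(\xt_\tau[h;\xi])-V(\xt_\tau)}{h}$ differ by a quantity tending to zero. Their $\limsup$s are therefore equal, which is the claimed identity. Note that $0<h<\varepsilon d$ is needed for $\xt_\tau[h;\xi]$ to be defined, and this is harmless since only small $h$ matter.

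The only delicate point is uniformity of the $o(h)$ bound over $\theta\in[-h,0]$. It is resolved by the supremum-of-ratios argument above rather than by continuity of $\xt$. Nothing beyond the Lipschitz property of $V$ in the sup norm is used, which is consistent with the footnote's remark that the lemma holds for general locally Lipschitz functionals. In the local case the same argument runs on a bounded neighborhood of $\xt_\tau$, which contains both segments for small $h$.
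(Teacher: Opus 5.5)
Your proposal is correct and follows essentially the same route as the paper: you bound $|V(\xt_{\tau+h})-V(\xt_\tau[h;\xi])|$ via $V$'s Lipschitz property in $E$ and Lemma~\ref{lem:property_of_E}, note that the segments differ only on the tail $\theta\in(-h,0]$, and substitute $u=\theta+h$. You then control the tail uniformly with the $\sup_{0<u\le h}\eta(u)/u$ argument, which is exactly the paper's final step.
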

\begin{proof}
    By the definition of $D^+V(\xt_{\tau})$, it suffices to show that $\lim_{h \to 0+} \frac{1}{h}|V(\xt_{\tau+h})-V(\xt_\tau[h;\xi])| =0$.
    Note that the values of $\xt_{\tau+h}(\theta)$ and $\xt_\tau[h;\xi](\theta)$ differ only for $\theta \in (-h,0]$. Therefore, we obtain
    \begin{align*}
            & \frac{1}{h}|V(\xt_{\tau+h})-V(\xt_\tau[h;\xi])| \le \frac{c_4}{h}E(\xt_{\tau+h}-\xt_\tau[h;\xi]) \\
            &\le \frac{2c_4}{h}\sup_{\theta \in (-h,0]}\|\xt_{\tau+h}(\theta)-\xt_\tau[h;\xi](\theta)\|_{\infty} \\
            &= \frac{2c_4}{h}\sup_{\tilde{\theta} \in (0,h]} \|\xt(\tau+\tilde{\theta})-\xt(\tau)-\tilde{\theta} \xi\|_{\infty}\\
            &\le \sup_{\tilde{\theta} \in (0,h]} \frac{2c_4}{\tilde{\theta}}\|\xt(\tau+\tilde{\theta})-\xt(\tau)-\tilde{\theta} \xi\|_{\infty} \xrightarrow{h \to 0+} 0
    \end{align*}
    where we used Lemma~\ref{lem:property_of_E} and the change of variables $\tilde{\theta}=\theta+h$.
    For the last part, we applied the definition of the right-hand derivative.
\end{proof}

Define $G_i \coloneqq \frac{1}{\varepsilon}\sum_{j \in \Ncal_i}a_{ij}(\xt_j(\tau-\varepsilon d_{ij}) - \xt_i(\tau))$, $G \coloneqq \mathrm{col}(G_1,\ldots, G_N)$, and $G^* \coloneqq f(\tau,\xt(\tau)) + G$. Note that $\xt$ has right-hand derivative $G^*$ at $\tau$, whereas $G$ is the vector field of \eqref{eq:integrator_tau} with $\vt$ replaced by $\xt$. Then by Lemma~\ref{lem:Dini_trick},
\begin{align*}
    &D^+V(\xt_\tau)= \limsup_{h \to 0+} \frac{V(\xt_{\tau}[h;G^*])-V(\xt_\tau)}{h}\\
    &\le \limsup_{h \to 0+}\frac{V(\xt_{\tau}[h;G])-V(\xt_\tau)}{h}\\
    &\qquad+\limsup_{h \to 0+}\frac{V(\xt_{\tau}[h;G^*])-V(\xt_{\tau}[h;G])}{h}\\
    &\le -\frac{1}{2\varepsilon}E(\xt_\tau) + \limsup_{h \to 0+} \frac{c_4}{h} E\left( \xt_{\tau}[h;G^*]-\xt_{\tau}[h;G] \right)\\
    &\le -\frac{1}{2\varepsilon}E(\xt_\tau) + \limsup_{h \to 0+} \sup_{\tilde{\theta} \in (0,h]} \frac{2c_4}{h}\| \tilde{\theta} f(\tau,\xt(\tau))\|_{\infty}\\
    &\le  -\frac{1}{2\varepsilon c_2}V(\xt_\tau) + 2c_4\|f(\tau,\xt(\tau))\|_\infty.
\end{align*}
This proves the first claim. The next claim follows directly from the comparison principle and properties of $V$.

\section{Complete Proof of Theorem~\ref{thm:main_result}}\label{section:complete_proof_of_theorem_1}

\subsection{Preparation}\label{subsection:preparation}

By Assumption~\ref{ass:contractivity}, any two solutions of $\eqref{eq:blended_dynamics}$ approach each other \cite{lohmiller1998aut}. Therefore, it suffices to fix $\st(0) \in S$ where $S$ is defined in Lemma~\ref{lem:s_dynamics}. By Lemma~\ref{lem:s_dynamics}, $\st(\tau) \in S$ for all $\tau \ge 0$.
Choose a compact rectangle $K_1 \subseteq \R^n$ such that $K \cup S \subsetneq \operatorname{int} K_1$. Let $r_1>\mathrm{diam}(K_1) \coloneq \sup_{p,q \in K_1}\|p-q\|$. Then for any $d_{ij} \in [0,d]$, it follows that
\begin{align}\label{eq:K_1_and_r_1}
    \|p-\st(\tau)\| \le r_1, \qquad \forall p \in  K_1,\forall \tau \ge 0.
\end{align}
Next, define a compact, convex set $K_2 \subseteq \R^n$ by
\begin{align}\label{eq:K_2_and_r_2}
    K_2 \coloneqq \bigcup_{q\in K_1} \left\{ p \in \R^n : \|p-q\| \le r_2 +1 \right\},
\end{align}
where $r_2 \coloneq \sqrt{\frac{\|H\|}{\lambda_{\min}(H)}}(1+ 2\bar{a}d)r_1$, $\bar{a}=\frac{1}{N}\sum_{i,j=1}^N a_{ij}$, and $H$ is defined in Assumption~\ref{ass:contractivity}. The specific choices of $r_1$, $r_2$, $K_1$, and $K_2$ are needed only in the final part of the proof. Until then, it suffices to note that $K \subsetneq \operatorname{int} K_1$, $ K_1 \subsetneq \operatorname{int} K_2$, and these sets do not depend on $\varepsilon$ or the values of $d_{ij}$.

Let $B_0 \coloneq \sup\{\|f_i(\tau,\xt_i)\|_\infty:\xt_i \in K_2, i=1,\ldots,N,\tau \ge 0\}$, which is finite by Assumption~\ref{ass:vector_field}.
Since $K \subsetneq \operatorname{int} K_1$, Lemma~\ref{lem:evolution_bound} yields a constant $T>0$ such that, for any choice of $d_{ij} \in [0,d]$ and $\varepsilon  \in (0,1]$, one has $\xt_i(\tau) \in K_1$ for all $i$ and $\tau \in [0,T]$.
For later use, we split the interval $[0,T]$ by setting $T_1\coloneqq T/3$ and $T_2 \coloneqq 2T/3$.

Although we are ready to choose the desired $\varepsilon^*$, we defer its specification to the following subsections for clarity.
The reader may assume that $\varepsilon^*$ has been chosen here, and later verify that such a choice can indeed be made, independently of the values of $d_{ij}$. Henceforth, fix any collection of delays $d_{ij}\in [0,d]$. For $\varepsilon>0$, let $\overline{T}_{\varepsilon}$ be the first time at which any $\xt_i(\tau),i=1,\ldots,N$ leaves $K_2$. By construction, $\overline{T}_{\varepsilon}>T$ and $\xt_i(\tau) \in K_2$ for all $i$ and $\tau \in [0,\overline{T}_\varepsilon)$. In particular, $\|f_i(\tau,\xt_i(\tau))\|_\infty \le B_0$ for all $i$ and $\tau \in [0,\overline{T}_\varepsilon)$.
We later show that $\overline{T}_{\varepsilon}=\infty$ whenever $0<\varepsilon \le \varepsilon^*$ by a contradiction argument.

\subsection{Practical Synchronization}

Let $V$ be the functional defined in Lemma~\ref{lem:output_converse_Lyapunov_Krasovskii}. Using $V \le c_2 E$, choose an upper bound $V_0$ for $V(\xt[-\varepsilon d,0])$ over all initial histories $\xt_i[-\varepsilon d,0] \in C([-\varepsilon d,0],K)$, $i=1,\ldots,N$. Note that $V_0$ can be chosen finite, independently of $\varepsilon$. For $\tau \in [0,\overline{T}_{\varepsilon})$, Lemma~\ref{lem:practical_consensus} gives
\begin{align*}
    E(\xt_\tau) \le \exp\left(-\frac{\tau}{2\varepsilon c_2} \right)V_0 + 4\varepsilon c_2 c_4 B_0.
\end{align*}
When $\tau=T_1$, we obtain
\begin{align*}
    \lim_{\varepsilon \to 0+} \frac{1}{\varepsilon}\exp\left(-\frac{T_1}{2\varepsilon c_2} \right)V_0=0.
\end{align*}
Therefore, $E(\xt_\tau)=O(\varepsilon)$ as $\varepsilon \to 0+$, uniformly for $\tau \in [T_1,\overline{T}_{\varepsilon})$. Choose $\varepsilon^*_1>0$ such that $0<\varepsilon \le \varepsilon^*_1$ implies
\begin{align}\label{eq:consensus_error_x}
\begin{aligned}
    E(\xt_\tau) &\le 8\varepsilon c_2 c_4 B_0 \eqqcolon \bar{B}_0 \varepsilon,\\
    \|\wt(\tau)\| &\le \sqrt{Nn}E(\xt_\tau) \le 1,\\
    \varepsilon d & \le T/3=T_1=T_2-T_1,
\end{aligned}
\qquad\forall \tau \in [T_1,\overline{T}_{\varepsilon}).
\end{align}
Since the constants $c_2,c_4,T_1,V_0,$ and $B_0$ do not depend on the values of $d_{ij}$, $\varepsilon^*_1$ can be chosen independently of $d_{ij}$. From now on, we assume $0<\varepsilon \le \varepsilon^*_1$ so that \eqref{eq:consensus_error_x} holds.

For each $h \in [0,d]$, define the difference quotient
\begin{align*}
    \qt^{h}_i(\tau) \coloneqq \frac{\xt_i(\tau-\varepsilon h)-\xt_i(\tau)}{\varepsilon}
\end{align*}
and set $\qt^{h} \coloneqq \mathrm{col}(\qt_1^{h},\ldots,\qt_N^{h})$. We show that $\{\qt^h_i\}_{i=1}^N$ also achieves practical synchronization. Note that \eqref{eq:consensus_error_x} implies
\begin{align}\label{eq:q_bound}
\begin{split}
    \|\qt^{h}_i(\tau)\|_\infty &=\max_{\ell = 1,\ldots,n}\left| \frac{\xt_i^{\ell}(\tau-\varepsilon h)-\xt_i^{\ell}(\tau)}{\varepsilon} \right| \\
    &\le \frac{1}{\varepsilon}E(\xt_\tau) \le  \bar{B}_0, \qquad \forall\tau \in [T_1,\overline{T}_{\varepsilon}).
\end{split}
\end{align}
By the linearity of the diffusive coupling, $\{\qt_i^{h}\}_{i=1}^N$ inherits the same coupling structure; since $\varepsilon d \le T_1$ by \eqref{eq:consensus_error_x}, for $\tau \in  [T_1,\overline{T}_\varepsilon)$ we have
\begin{align}\label{eq:q_dynamics}
\begin{aligned}
    (\qt_i^{h})'(\tau)
    &=\frac{1}{\varepsilon}[f_i(\tau-\varepsilon h,\xt_i(\tau-\varepsilon h))-f_i(\tau,\xt_i(\tau))]\\
    &\quad +\frac{1}{\varepsilon}\sum_{j \in \Ncal_i}a_{ij}(\qt_j^{h}(\tau-\varepsilon d_{ij})-\qt_i^{h}(\tau)).
\end{aligned}
\end{align}

We bound the first term on the right-hand side of \eqref{eq:q_dynamics}. By Assumptions~\ref{ass:vector_field} and \ref{ass:Lipschitz_continuity}, there exists a Lipschitz constant $L_{K_2}>0$ (with respect to $\|\cdot\|_\infty$) such that each $f_i$ is Lipschitz in $t \in [0,\infty)$ uniformly in $x_i \in K_2$, and Lipschitz in $x_i \in K_2$ uniformly in $t$. Hence, using \eqref{eq:q_bound}, we obtain
\begin{align*}
    &\frac{1}{\varepsilon}\left\| f_i(\tau-\varepsilon h,\xt_i(\tau-\varepsilon h))-f_i(\tau,\xt_i(\tau)) \right\|_\infty\\
    &\le \frac{1}{\varepsilon}\|f_i(\tau-\varepsilon h,\xt_i(\tau-\varepsilon h))-f_i(\tau,\xt_i(\tau-\varepsilon h))\|_\infty\\
    &\quad+\frac{1}{\varepsilon}\|f_i(\tau,\xt_i(\tau-\varepsilon h))-f_i(\tau,\xt_i(\tau))\|_\infty\\
    &\le L_{K_2} (h + \|\qt_i^h(\tau)\|_\infty)\\
    &\le L_{K_2} (d + \bar{B}_0) \eqqcolon B_1, \qquad \forall\tau \in [T_1,\overline{T}_{\varepsilon}).
\end{align*}
 Next, we use the functional $V$ to estimate the synchronization error for $\qt^h$. Applying Lemma~\ref{lem:practical_consensus} on \eqref{eq:q_dynamics} instead of \eqref{eq:tau_scale}, we obtain, for $\tau \in [T_2,\overline{T}_{\varepsilon})$,
\begin{align*}
    E(\qt^h_\tau) &\le \exp\left(-\frac{\tau - T_2}{2\varepsilon c_2} \right)V(\qt^h_{T_2}) + 4\varepsilon c_2 c_4B_1.
\end{align*}
Using \eqref{eq:q_bound} and $\varepsilon d \le T_2-T_1$ from \eqref{eq:consensus_error_x}, we have
\begin{align*}
    V(\qt^h_{T_2}) &\le c_2E(\qt^h_{T_2})\\
    &\le 2c_2\sup_{\theta \in [-\varepsilon d,0]}\|\qt^h(T_2+\theta)\|_\infty \le 2c_2\bar{B}_0.
\end{align*}
By the same argument as before, we conclude that $E(\qt^h_\tau)=O(\varepsilon)$ as $\varepsilon \to 0+$, uniformly in $\tau \in [T,\overline{T}_{\varepsilon})$ and $h \in [0,d]$. Choose $\varepsilon^*_2>0$ such that $0 < \varepsilon \le \varepsilon^*_2$ implies 
\begin{align}\label{eq:consensus_error_q}
    E(\qt^h_\tau) &\le  8\varepsilon c_2 c_4 B_1 \eqqcolon \bar{B}_1 \varepsilon, \qquad \forall\tau \in [T,\overline{T}_{\varepsilon})
\end{align}
uniformly in $h \in [0,d]$. Note that $\varepsilon^*_2$ can be chosen independently of the values of $d_{ij}$, similar to the case of $\varepsilon^*_1$.

\subsection{Approximation by Blended Dynamics}\label{subsection:approximation_by_blended_dynamics}

We now estimate the difference between $\zt$ and $\st$. We omit the argument $\tau$ (e.g., $\zt$ instead of $\zt(\tau)$) when it is clear from the context; delayed terms are always written explicitly. Assume $0<\varepsilon \le \min\{\varepsilon^*_1,\varepsilon^*_2\}$ so that \eqref{eq:consensus_error_x} and \eqref{eq:consensus_error_q} hold. Let $F(\tau,\zt,\wt)\coloneq \frac{1}{N}\sum_{i=1}^N f_i(\tau,\zt+(R_i\otimes I_n)\wt)$. Using \eqref{eq:blended_dynamics} and \eqref{eq:z_dynamics}, and adding and subtracting $F(\tau,\zt,0)$ and $\frac{1}{N}\sum_{i,j=1}^N a_{ij} \frac{\st(\tau-\varepsilon d_{ij})-\st}{\varepsilon}$, we obtain
\begin{align*}
    &\zt'-\st'=\left[ F(\tau,\zt,0) -F(\tau,\st,0) \right]+[F(\tau,\zt,\wt)-F(\tau,\zt,0)]\\
    &\quad+\frac{1}{\varepsilon N}\sum_{i,j=1}^N a_{ij}\left[\zt(\tau-\varepsilon d_{ij})-\st(\tau-\varepsilon d_{ij})-(\zt-\st)\right]\\
    &\quad+\frac{1}{N}\sum_{\substack{i,j=1 \\ d_{ij} \neq 0}}^N a_{ij}d_{ij}\left[\frac{1}{1+A}F(\tau,\st,0)- \frac{\st(\tau-\varepsilon d_{ij})-\st}{-\varepsilon d_{ij}} \right]\\
    &\quad +\frac{1}{N} \sum_{i,j=1}^N a_{ij}(R_j \otimes I_n)\frac{\wt(\tau-\varepsilon d_{ij})-\wt}{\varepsilon}.
\end{align*}
Note that Assumption~\ref{ass:contractivity} gives
\begin{align*}
    (\zt-\st)^TH\left[ F(\tau,\zt,0) -F(\tau,\st,0) \right] \le -\frac{\gamma}{2}\|\zt-\st\|^2_H.
\end{align*}
Also, when $d_{ij} \neq 0$ and $\tau \ge T$, \eqref{eq:blended_dynamics} gives the relation
\begin{align*}
    \frac{\st(\tau-\varepsilon d_{ij})-\st}{-\varepsilon d_{ij}}=\frac{1}{\varepsilon d_{ij}} \int_{\tau-\varepsilon d_{ij}}^\tau \frac{1}{1+A}F(\theta,\st(\theta),0)d \theta,
\end{align*}
and this leads to
\begin{align*}
    &\bigg\|\frac{1}{1+A}F(\tau,\st(\tau),0)- \frac{\st(\tau-\varepsilon d_{ij})-\st(\tau)}{-\varepsilon d_{ij}} \bigg\|_H\\
    &=\frac{1}{\varepsilon d_{ij}} \bigg\| \frac{1}{1+A} \int_{\tau -\varepsilon d_{ij}}^{\tau} (F(\tau,\st(\tau),0)-F(\theta,\st(\theta),0)) d\theta\bigg\|_H\\
    &\le \frac{1}{1+A}\sup_{\theta \in [\tau-\varepsilon d,\tau]}\|F(\tau,\st(\tau),0)-F(\theta,\st(\theta),0)\|_H.
\end{align*}

Let $W_1(\tau)\coloneq \|\zt(\tau)-\st(\tau)\|^2_H$. By the above results and the Cauchy-Schwarz inequality, we obtain for $\tau \in [T,\overline{T}_\varepsilon)$,
\begin{align*}
    &W'_1=2(\zt-\st)^TH(\zt'-\st')\\
    &\le -\gamma W_1 + 2\sqrt{W_1}\sqrt{\|H\|}\left\| \frac{\partial F}{\partial \wt}(\tau,\zt,r\wt) \right\|\|\wt\|\\
    &+ \frac{2\sqrt{W_1}}{\varepsilon N}\sum_{i,j=1}^N a_{ij}\left( \sqrt{W_1(\tau-\varepsilon d_{ij})}-\sqrt{W_1} \right)\\
    &+2\sqrt{W_1} \frac{\bar{a}d}{1+A} \sup_{\theta \in [\tau-\varepsilon d,\tau]}\|F(\tau,\st,0)-F(\theta,\st(\theta),0)\|_H\\
    &+\frac{2\sqrt{W_1}}{N}\sqrt{\|H\|}\sum_{i,j=1}^N a_{ij}\left\| \frac{\wt(\tau- \varepsilon d_{ij})-\wt}{\varepsilon}\right\|,
\end{align*}
where $r \in (0,1)$ is chosen by the mean value theorem.

To cancel out the delayed term, define
\begin{align*}
    W_2(\tau) \coloneqq \frac{1}{\varepsilon N} \sum_{i,j=1}^N a_{ij} \int_{\tau-\varepsilon d_{ij}}^\tau e^{-c(\tau-\varepsilon d_{ij}-\theta)}\sqrt{W_1(\theta)}\ d\theta,
\end{align*}
where $c\coloneqq \gamma/(2+3\bar{a}d)>0$, $\gamma$ is in Assumption~\ref{ass:contractivity}, and $\bar{a}=\frac{1}{N}\sum_{i,j=1}^N a_{ij}$. Set $W\coloneq\sqrt{W_1} +W_2$. Let $l>0$ be an upper bound for $\left\| \frac{\partial F}{\partial \wt}(\tau,\zt,\wt) \right\|$ over all $\tau \ge 0$, $\zt \in K_2$, and $\|\wt\| \le 1$.
When $0<\varepsilon \le \min\{\varepsilon^*_1,\varepsilon^*_2\}$ and $\tau \in [T,\overline{T}_{\varepsilon})$, we have $\|\wt\| \le 1$ by \eqref{eq:consensus_error_x} and $\zt \in K_2$ by the definition of $\overline{T}_\varepsilon$. Therefore, in this case, we obtain
\begin{align}\label{eq:DW}
\begin{split}
    &D^+W = D^+\sqrt{W_1} -cW_2\\
    &\qquad\quad+ \frac{1}{\varepsilon N}\sum_{i,j=1}^N a_{ij}\bigg(e^{\varepsilon c d_{ij}}\sqrt{W_1}-\sqrt{W_1(\tau-\varepsilon d_{ij})}\bigg)\\
    &\le \bigg( -\frac{\gamma}{2} + \frac{1}{N}\sum_{i,j=1}^N a_{ij}\frac{e^{\varepsilon c d_{ij}}-1}{\varepsilon} \bigg)\sqrt{W_1} - cW_2\\
    &\quad+\frac{\bar{a}d}{1+A} \sup_{\theta \in [\tau-\varepsilon d,\tau]}\|F(\tau,\st,0)-F(\theta,\st(\theta),0)\|_H\\
    &\quad+ l\sqrt{\|H\|}\,\|\wt\|+\frac{\sqrt{\|H\|}}{N}\sum_{i,j=1}^N a_{ij}\left\| \frac{\wt(\tau- \varepsilon d_{ij})-\wt}{\varepsilon}\right\|\\
    &\le \bigg(-\frac{\gamma}{2} + \bar{a}\frac{e^{\varepsilon c d}-1}{\varepsilon} \bigg)\sqrt{W_1} - cW_2\\
    &\quad+\bar{a}d\sup_{\theta \in [\tau-\varepsilon d,\tau]}\|F(\tau,\st,0)-F(\theta,\st(\theta),0)\|_H\\
    &\quad+ l\sqrt{\|H\|}\,\|\wt\|+\bar{a}\sqrt{\|H\|}\max_{i,j} \left\| \frac{\wt(\tau- \varepsilon d_{ij})-\wt}{\varepsilon}\right\|.
\end{split}
\end{align}

We focus on the last expression in \eqref{eq:DW}. We show that for sufficiently small $\varepsilon$, the coefficient of $\sqrt{W_1}$ is negative, while the last three terms are $O(\varepsilon)$ as $\varepsilon \to 0+$ uniformly for $\tau \in [T,\overline{T}_{\varepsilon})$. Note that $\lim_{\varepsilon \to 0+} (e^{\varepsilon c d}-1)/\varepsilon = cd$.
Choose $\varepsilon^*_3>0$, independently of $d_{ij}$, such that $0<\varepsilon \le \varepsilon^*_3$ implies $(e^{\varepsilon c d}-1)/{\varepsilon}<3cd/{2}$.
Set
\begin{align}\label{eq:epsilon_star}
    \varepsilon^*\coloneq \min \left\{\varepsilon^*_1,\varepsilon^*_2,\varepsilon^*_3,\frac{\ln2}{cd}, \frac{1}{2\beta_0}, 1\right\} >0,
\end{align}
where $\beta_0>0$ will be specified later; see \eqref{eq:C_1} and \eqref{eq:last_bound} for the explicit form of $\beta_0$. From now on, assume $0<\varepsilon \le \varepsilon^*$.
Since $-\frac{\gamma}{2}+\frac{3}{2}c\bar{a}d=-c$ from the choice of $c$, the condition $0<\varepsilon \le  \varepsilon^*$ simplifies \eqref{eq:DW} to
\begin{align}\label{eq:DW_simplified}
\begin{split}
    D^+W 
    & \le -cW + l\sqrt{\|H\|}\,\|\wt\|\\
    & \quad+\bar{a}d\sup_{\theta \in [\tau-\varepsilon d,\tau]}\|F(\tau,\st,0)-F(\theta,\st(\theta),0)\|_H\\
    &\quad+\bar{a}\sqrt{\|H\|}\max_{i,j}\left\| \frac{\wt(\tau- \varepsilon d_{ij})-\wt}{\varepsilon}\right\|.
\end{split}
\end{align}

By Assumptions~\ref{ass:vector_field} and \ref{ass:Lipschitz_continuity}, $F(\tau,\st,0)$ is Lipschitz in $\tau \in [0,\infty)$ uniformly in $\st \in S$ and Lipschitz in $\st \in S$ uniformly in $\tau$. Let $L_S>0$ be a common Lipschitz constant of $F(\tau,\st,0)$ for both cases (with respect to $\|\cdot\|$). Also, by Lemma~\ref{lem:s_dynamics}, we can choose a Lipschitz constant $L_{\st}>0$ of the solution $\st(\tau)$ on $[0,\infty)$, independently of $d_{ij}$. Using these facts, we obtain
\begin{align}\label{eq:differentiation_approximation}
\begin{split}
    &\sup_{\theta \in [\tau-\varepsilon d,\tau]}\|F(\tau,\st(\tau),0)-F(\theta,\st(\theta),0)\|_H \\
    & \quad \le \sup_{\theta \in [\tau-\varepsilon d,\tau]}\|F(\tau,\st(\tau),0)-F(\theta,\st(\tau),0)\|_H \\
    &\quad\quad +\sup_{\theta \in [\tau-\varepsilon d,\tau]}\|F(\theta,\st(\tau),0)-F(\theta,\st(\theta),0)\|_H \\
    & \quad \le \sqrt{\|H\|} L_S(1+L_{\st})d\varepsilon, \qquad  \forall \tau \in  [T,\overline{T}_\varepsilon).
\end{split}
\end{align}
Moreover, using \eqref{eq:consensus_error_x}, \eqref{eq:consensus_error_q} and Lemma~\ref{lem:property_of_E}, $0<\varepsilon\le \varepsilon^*$ implies
\begin{align}\label{eq:synchronization_condition}
\begin{split}
    &\|\wt(\tau)\|\le \sqrt{Nn}E(\xt_\tau) \le \sqrt{Nn} \bar{B}_0 \varepsilon,\\
    &\max_{i,j}\left\| \frac{\wt(\tau- \varepsilon d_{ij})-\wt(\tau)}{\varepsilon}\right\| = \max_{i,j}\|(R^T\otimes I_n)\qt^{d_{ij}}(\tau)\|\\
    &\le \max_{i,j}\sqrt{Nn}E(\qt_{\tau}^{d_{ij}})\le \sqrt{Nn} \bar{B}_1 \varepsilon, \qquad \forall \tau \in [T,\overline{T}_{\varepsilon}).
\end{split}
\end{align}
Then, by substituting \eqref{eq:differentiation_approximation} and \eqref{eq:synchronization_condition} into \eqref{eq:DW_simplified}, we obtain
\begin{align*}
    D^+W(\tau) &\le -cW(\tau) +  \beta_1 \varepsilon        ,\qquad \forall \tau \in [T,\overline{T}_{\varepsilon})
\end{align*}
where we defined
\begin{align}\label{eq:C_1}
    \beta_1 \coloneqq  \sqrt{\|H\|} \left[ \sqrt{Nn} (l \bar{B}_0 + \bar{a}\bar{B}_1 )+  \bar{a}d^2 L_S(1+L_{\st}) \right].
\end{align}
By the comparison principle,
\begin{align}\label{eq:bound_for_W}
    W(\tau) \le e^{-c(\tau-T)}W(T)  + \frac{\beta_1}{c}\varepsilon, \qquad \forall \tau \in [T,\overline{T}_{\varepsilon}).
\end{align}

We now estimate $W(T)$. By the choice of $T$, $\zt(\tau) \in K_1$ for $\tau \in [0,T]$. Using \eqref{eq:K_1_and_r_1}, we have $\sqrt{W_1(\tau)} =\|\zt(\tau)-\st(\tau)\|_H \le r_1\sqrt{\|H\|}$ for $\tau \in [0,T]$. Moreover, this yields
\begin{align*}
    W_2(T) &\le \frac{1}{\varepsilon N} \sum_{i,j=1}^N a_{ij} \int_{T-\varepsilon d_{ij}}^T e^{-c(T-\varepsilon d_{ij}-\theta)}r_1\sqrt{\|H\|}\ d\theta\\
    &\le\frac{1}{N}\sum_{i,j=1}^N a_{ij} d_{ij}e^{\varepsilon c d}r_1\sqrt{\|H\|} \le 2\bar{a}dr_1\sqrt{\|H\|},
\end{align*}
where we used $\varepsilon\le \ln2/(cd)$. Hence, it follows that
\begin{align}\label{eq:WT_bound}
    W(T)=\sqrt{W_1(T)} + W_2(T) \le \sqrt{\|H\|}(1+2\bar{a}d)r_1.
\end{align}

Finally, using \eqref{eq:synchronization_condition}, \eqref{eq:bound_for_W}, and \eqref{eq:WT_bound}, we obtain
\begin{align}\label{eq:last_bound}
\begin{split}
    &\|\xt_i(\tau)-\st(\tau)\| \le \|\zt(\tau)-\st(\tau)\| + \|(R_i\otimes I_n)\wt(\tau)\|\\
    &\le \frac{W(\tau)}{\sqrt{\lambda_{\min}(H)}} + \|\wt(\tau)\|\\
    &\le e^{-c(\tau-T)}\frac{W(T)}{\sqrt{\lambda_{\min}(H)}} + \bigg( \underbrace{ \frac{\beta_1}{c\sqrt{\lambda_{\min}(H)}} + \sqrt{Nn} \bar{B}_0 }_{\eqqcolon \beta_0}\bigg) \varepsilon\\
    &\le e^{-c(\tau-T)}\sqrt{\frac{\|H\|}{\lambda_{\min}(H)}}(1+2\bar{a}d)r_1 + \beta_0 \varepsilon \\
    &\le e^{-c(\tau-T)}r_2 + \beta_0 \varepsilon \\
    &\le r_2 + \frac{1}{2}, \qquad \forall \tau \in [T,\overline{T}_{\varepsilon}),
\end{split}
\end{align}
where we used $\varepsilon \le 1/(2\beta_0)$ for the last inequality. Note that $\beta_0$ is determined independently of $\varepsilon^*$, which allows the choice \eqref{eq:epsilon_star} to be valid. Recall from Appendix~\ref{subsection:preparation} that $\st(\tau) \in K_1$ for all $\tau \ge 0$.
If $\overline{T}_{\varepsilon} < \infty$, then each $\xt_i(\overline{T}_{\varepsilon})$ lies in the interior of $K_2$ by \eqref{eq:K_2_and_r_2} and \eqref{eq:last_bound}, which contradicts the definition of $\overline{T}_\varepsilon$. Therefore, $\overline{T}_{\varepsilon}=\infty$ whenever $0 <\varepsilon \le \varepsilon^*$.
Let $\tau \to \infty$ in the second-to-last inequality of \eqref{eq:last_bound} to complete the proof.


\end{document}